\newtheorem{theorem}{Theorem}
\newtheorem{example}{Example}
\tikzstyle{arrow} =[thick,->,>=stealth] 
\pgfplotsset{compat=newest}
\definecolor{Cblue}{RGB}{31 119 180}
\definecolor{Cred}{RGB}{214 39 40}
\definecolor{Corange}{RGB}{255 127 14}
\definecolor{Cgreen}{RGB}{44 160 44}
\definecolor{Cpurple}{RGB}{148 103 198}
\definecolor{Cbrown}{RGB}{140 86 75}
\definecolor{Cpink}{RGB}{227 119 194}
\definecolor{Ccyan}{RGB}{23 190 207}
\definecolor{Cgray}{RGB}{127 127 127}
\definecolor{Cyellow}{RGB}{255 204 0}
\definecolor{Cdarkgray}{RGB}{100 100 100}
\newcommand{\xpol}{{\mathsf{x}}}
\newcommand{\ypol}{{\mathsf{y}}}
\newcommand{\vect}[1]{\ensuremath{\boldsymbol{ #1 }}}
\definecolor{lime}{HTML}{A6CE39}
\DeclareRobustCommand{\orcidicon}{
	\includegraphics{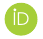}
}
\definecolor{C9}{RGB}{183 69 254}
\definecolor{C8}{RGB}{166 80 228}
\definecolor{C7}{RGB}{149 92 202}
\definecolor{C6}{RGB}{131 103 176}
\definecolor{C5}{RGB}{118 111 155}
\definecolor{C4}{RGB}{96 126 122}
\definecolor{C3}{RGB}{79 138 96}
\definecolor{C2}{RGB}{61 149 70}
\definecolor{C1}{RGB}{45 159 45}
\begin{document}
\title{Data-driven Enhancement of the Time-domain First-order Regular Perturbation Model}
\author{Astrid~Barreiro\hspace{-0.1cm}\orcidA{}\hspace{-0.2cm},~\IEEEmembership{Student Member,~IEEE,}
        Gabriele~Liga\hspace{-0.1cm}\orcidB{}\hspace{-0.2cm},~\IEEEmembership{Member,~IEEE,}
        and~Alex~Alvarado\hspace{-0.1cm}\orcidD{}\hspace{-0.2cm},~\IEEEmembership{Senior Member,~IEEE}
\thanks{A. Barreiro, G. Liga, and A. Alvarado are with the Department
of Electrical Engineering, Eindhoven University of Technology, 5600 MB Eindhoven, The Netherlands. e-mail: a.barreiro.berrio@tue.nl.}
\thanks{Parts of this work have been presented at the IEEE Photonics Conference IPC  Vancouver, Canada, Nov. 2022 \cite{Barreiro2022}.}
\thanks{The work of A. Barreiro and A. Alvarado has received funding from the European Research Council (ERC) under the European Union's Horizon 2020 research and innovation programme (grant agreement No 757791). G.~Liga gratefully acknowledges the EuroTechPostdoc programme under the European Union's Horizon 2020 research and innovation programme (Marie Sk\l{}odowska-Curie grant agreement No. 754462).}}

\markboth{Preprint, \today}%
{Barreiro \MakeLowercase{\textit{et al.}}: On a Data-driven Enhancement for the First-order Regular Perturbation Model}

\maketitle

\begin{abstract}
A normalized batch gradient descent optimizer is proposed to improve the first-order regular perturbation coefficients of the Manakov equation, often referred to as kernels. The optimization is based on the linear parameterization offered by the first-order regular perturbation and targets enhanced low-complexity models for the fiber channel. We demonstrate that the optimized model outperforms the analytical counterpart where the kernels are numerically evaluated via their integral form. The enhanced model provides the same accuracy with a reduced number of kernels while operating over an extended power range covering both the nonlinear and highly nonlinear regimes.  A $6-7$~dB gain, depending on the metric used, is obtained with respect to the conventional first-order regular perturbation.
\end{abstract}

\begin{IEEEkeywords}
Channel modeling, perturbation methods, fiber nonlinearities, gradient descent. 
\end{IEEEkeywords}


\section{Introduction}

 \IEEEPARstart{C}{hannel} models are the cornerstone in the design of fiber-optic communication systems. Modeling provides physical insights into the light propagation phenomena and yields techniques to effectively compensate for \textit{nonlinear interference (NLI)}, arguably the most significant factor limiting the capacity of long-haul coherent optical communication systems \cite[Sec. 9.1]{Bononi2020b}. A channel model in the form of a reasonably simple expression that, given the input to the channel, provides the corresponding output, is essential. Therefore, research on modeling has been a central topic in fiber-optic communications for many years \cite{Chen2010,Poggiolini2011,Carena2014}\footnote{A comprehensive timeline of channel modeling efforts can be found in \cite[Sec. I-A]{Rabbani2020}.}.  
 
 The origin of most analytical models for coherent systems is either the nonlinear Schr\"odinger (NLS) equation or the Manakov equation \cite[Sec. 9.1]{Bononi2020b}. These are the equations governing the signal propagation in fibers, and thus, finding their solution is crucial for predicting the NLI. None of these equations have closed-form solutions for arbitrary transmitted pulses. However, approximated solutions exist in the framework of perturbation theory \cite{Mecozzi2000,Vannucci2002b,Forestieri2005b,Poggiolini2011,DarOPEX13,Carena2014,Oliari2019b}. Perturbation theory can be used to develop fairly compact analytical expressions for computing the NLI under a first-order approximation. One of the most popular approaches uses first-order perturbation on the nonlinear fiber coefficient, which, following \cite{Ghazisaeidi2017}, we refer to in this paper as FRP. 
 
\begin{figure*}[t!]
    \centering 
    \resizebox{0.9\textwidth}{!}{\includegraphics{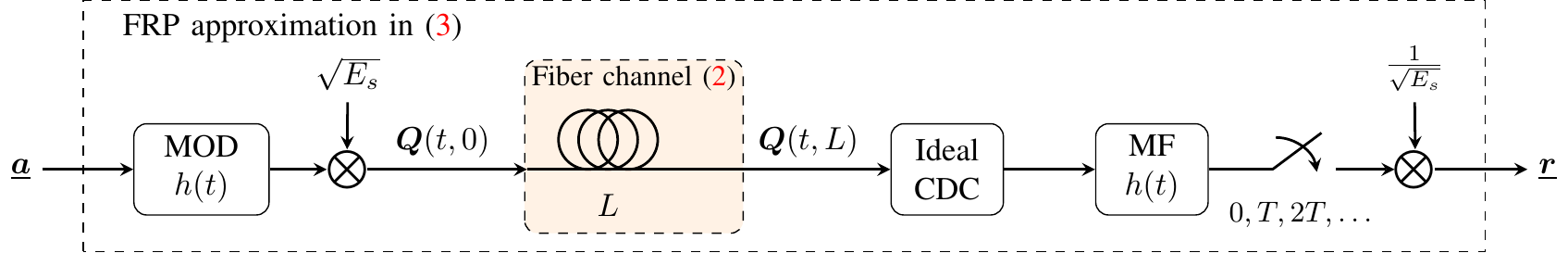}}
    \caption{System model under consideration in this work. The transmitter uses linear modulation (MOD) to generate the transmit signal using the normalized pulse shape $h(t)$. The receiver applies ideal chromatic dispersion compensation (CDC), matched filtering (MF), and sampling. The coefficient $\sqrt{E_s}$ is used to tune the launch power. The optical channel considers a fiber of length $L$, and propagation of the transmit signal through the fiber is given by \eqref{eq:norm_manakov}.} 
        \label{fig:system_model}
\end{figure*}

 FRP has been broadly employed in fiber-optic transmission systems, both in the frequency and time domains. In this paper, we focus on time-domain FRP since it is well suited to design coded modulation systems tailored to the fiber channel. In addition, time-domain FRP has proved great potential to be used in the performance assessment of systems operating in the pseudo linear regime\footnote{A comprehensive summary can be found in \cite[Chap. 9.4]{Bononi2020b}.} \cite{Poggiolini2011,Carena2014}, and for algorithm design in the context of NLI compensation \cite{Tao2011,Liang2014,Rafique2015,Dar17JLT}. 
 
 Although widely employed, time-domain FRP has two main drawbacks. First, as discussed in \cite{Tao2011}, FRP requires the computation of a generally large number of nonlinear perturbation coefficients, which are typically referred to as \emph{kernels}. In particular, to maintain a certain accuracy, the number of kernels that need to be computed grows cubically as $(2N+1)^3$, with $N$ being the channel memory. The effective memory of the channel increases with increments in bandwidth and fiber length \cite[Sec. II-B]{Agrell14}. Thus, kernels' evaluation becomes computationally demanding as transmission bandwidth and fiber length increase. This first drawback can restrict the usage of FRP to transmission scenarios limited in bandwidth and distance.

 The second drawback of time-domain FRP is the loss of precision at high powers. Since the nonlinear contribution in the Manakov/NLSE equation can no longer be considered a perturbation at high powers, higher-order terms in the perturbative expansion become significant. Such behavior sets a power threshold up to which FRP can accurately predict the channel output. This drawback can be avoided by considering higher order terms in the regular perturbation expansion  \cite{Vannucci2002b,Orappanpara2021}. However, such an approach makes the solution analytically more complex, which renders the FRP solution preferable in practice despite its reduced accuracy at high powers.
 
 In the context of the design of nonlinearity compensation/mitigation algorithms, multiple researchers have targeted a reduction of the FRP computational complexity. For instance, the pulse shape was designed to simplify the kernels' computation in \cite{Ghazisaeidi2014,Frey2018}. Other approaches, e.g., \cite{Tao2011,Dar17JLT}, reduce the number of kernels by considering the temporal phase-matching symmetry. Such symmetry enables the pruning of coefficients having zero contribution due to the isotropic phase distribution of the transmitted symbols \cite[Sec. VIII]{Mecozzi2012c}. Other works impose a quantization of the kernels \cite{Peng2015,Malekiha2016,Sorokina2017}. The quantization procedure equates sets of perturbation coefficients to a single value, leading to a significant reduction in the number of kernels to be computed for the FRP approximation. More recently, a significant number of data-driven approaches empowered by machine learning algorithms have been reported to perform the optimization of equivalent FRP coefficients \cite{Zhang2019,Gao2019,Melek2020,Redyuk2020} to be used in nonlinearity compensation algorithms.
 
 In this work, we propose a data-driven optimization of FRP-like kernels to generate an equivalent numerical FRP model addressing the two main drawbacks of the analytical FRP model. Our enhanced model provides the same accuracy with a reduced number of kernels while operating over an extended power range. The main contributions of our work compared to previous works are i) the kernels' optimization in this paper specifically targets improved low-complexity models for the fiber channel; ii) the optimization makes use of the linear parameterization offered by the FRP formalism. Optimizing the kernels in the FRP formalism reduces the modeling complexity and provides useful insights into some channel properties such as its effective memory length. The numerical results show that optimized kernels yield an effective FRP model that for the system under consideration extends the validity region of FRP 6-7~dB above the pseudo-linear threshold, significantly improves the model matching in phase and magnitude to the true value, and generates a memory reduction at a fixed model precision that translates into a complexity reduction of the model computation. 
 
 The paper is organized as follows. In Sec. \ref{sec:system_model} we introduce the transmission system model and briefly review the essentials of the time-domain FRP, also reported in \cite{DarJLT2015}. In Sec. \ref{sec:frp_performance} we assess the FRP performance that is the baseline for the subsequent analysis. In Sec. \ref{sec:enhanced_FRP} we present the essentials of our optimizer, where some examples are given to illustrate the vectorization of gradient descent. In Sec. \ref{sec:nbgd_results} we discuss the performance of the optimized model and discuss the implications over the validity region of FRP and its complexity. Section VI is devoted to conclusions. 

\section{Transmission system model}\label{sec:system_model}
Throughout this paper, a dual-polarization single-span unrepeated fiber-optic transmission system is considered. The block diagram in Fig.~\ref{fig:system_model} illustrates the system model under study. As the purpose of this model is to mainly study NLI, amplified spontaneous emission (ASE) noise is not taken into account.  Furthermore, we only consider single-channel transmission for simplicity of illustration of the proposed enhanced model.

 First, a sequence\footnote{\textbf{Notation convention:} We use boldface letters to denote column vectors, e.g., $\vect{u}$. Underlined bold letters represent infinite sequences of vectors, e.g., $\underline{\vect{u}}$. $|\cdot|$ denotes absolute value. When $|\cdot|$ is applied to a set, it denotes cardinality.  For any pair of vectors $\vect{u}$ and $\vect{v}$, we use $\oslash$ to denote the element-wise division, i.e., $\vect{w} =\vect{u}\oslash \vect{v}$ implies $w_i = u_i/v_i$. The operations $(\cdot)^\mathrm{T}$ and $(\cdot)^{\dagger}$ are the transpose and the Hermitian transpose respectively, and calligraphic letters are used to denote sets. $\mathbb{Z}$ denotes the set of integers, while $\mathbb{C}$ is the set of complex numbers. Throughout this paper, we often use triple indexation (e.g., $G_{klm}$), which we sometimes write using separating commas (e.g., $G_{k,l,m}$).} of two-dimensional complex symbols $\underline{\vect{a}}=\ldots, \vect{a}_{n-1},\vect{a}_{n},\vect{a}_{n+1},\ldots$ is used to linearly modulate the energy-normalized real pulse shape $h(t)$, i.e., $\int_{-\infty}^\infty h^2(t)\text{d}t = 1$. The transmit signal $\vect{Q}(t,0)$  at location $z=0$ is given by
 
 \begin{align}
     \label{eq:lin_mod}
     \vect{Q}(t,0) = \sqrt{E_s} \sum_{n=-\infty}^{\infty} \vect{a}_{n} h(t-nT),
 \end{align}
 where $T$ is the symbol duration, and $E_s$ is the average energy per transmitted symbol. Assuming the symbols $\vect{a}_{n}$ are taken form a normalized constellation, the parameter $E_s= P/2R_s$ in \eqref{eq:lin_mod} defines the total transmitted optical power, where $P$ and $R_s=1/T$ are the launched power and the symbol rate, respectively.

 Each symbol $\vect{a}_{n}$ in the sequence is $\vect{a}_{n} = (a_{\xpol,n}, a_{\ypol,n})^\mathrm{T}$, where $a_{\xpol,n}$ and $a_{\ypol,n}$ represent the complex symbols mapped onto two arbitrary orthogonal polarization states $\xpol$ and $\ypol$. The noiseless propagation of the two-dimensional complex envelope $\vect{Q}(z,t)$ through the fiber link is governed by the Manakov equation \cite[eq.~(57)]{Wai96}
 \begin{align}\label{eq:norm_manakov}
 \frac{\partial \vect{Q}}{\partial z} = -\jmath\frac{\beta_2}{2} \frac{\partial^2 \vect{Q}}{\partial t} +\jmath \gamma \frac{8}{9} e^{-\alpha z} |\vect{Q}|^2\vect{Q},
 \end{align}
 where for simplicity $(z,t)$ is omitted. In \eqref{eq:norm_manakov}, $\alpha$ is the attenuation coefficient, $\beta_2$ the group-velocity dispersion coefficient, and $\gamma$ the nonlinear coefficient. The field $\vect{Q}$ is considered to be attenuation-normalized \cite[eq. (3.1.3)]{Agrawal}. The first contribution on the RHS of \eqref{eq:norm_manakov} is associated with linear propagation, while the second is accounting for nonlinear propagation.
 
 At the receiver side in Fig.~\ref{fig:system_model}, ideal chromatic dispersion compensation is performed on the propagated field $\vect{Q}(t,L)$. The resulting field is then matched-filtered and sampled at the symbol rate. Throughout this work, we assume that the matched filter is matched to the transmitted pulse $h(t)$. The sequence of received symbols $\underline{\vect{r}}=\ldots, \vect{r}_{n-1},\vect{r}_{n},\vect{r}_{n+1},\ldots$ is obtained by scaling the samples by $1/\sqrt{E_s}$. Each output symbol $\vect{r}_{n}$ in the sequence  contains two orthogonal polarization states, namely $\vect{r}_{n} = [r_{\xpol,n}, r_{\ypol,n}]^T$. Average phase rotations on the received constellations are in practice compensated by DSP algorithms. In this paper, we do not consider any additional DSP step beyond what is included in Fig.~(1) (CDC and MF) to set as a benchmark a system that is energy preserving. Based on this choice, we are interested in benchmarking the accuracy of FRP (and enhanced versions thereof) with respect to the SSFM.
 
 For small enough values of the nonlinear coefficient $\gamma$, FRP approximates the exact solution to the Manakov equation yielding the following input-output relation in discrete-time \cite[eq. (3)]{DarJLT2015}:
\begin{align}\label{eq:single_ch_dual_pol_DRP}
	\vect{r}_n\approx \vect{a}_n + \jmath \frac{8}{9} \gamma  E_s \sum_{(k,l,m)\in \mathbb{Z}^3} \left(\vect{a}_{n+k}^\dagger \vect{a}_{n+l}\right) \vect{a}_{n+m} \hspace{0.1cm} S_{klm}.
\end{align}
 In \eqref{eq:single_ch_dual_pol_DRP}, $S_{klm}$ are complex perturbation coefficients that model self-phase modulation (SPM). They are defined as \cite[eq. (4)]{DarJLT2015}
 \begin{align}
 \begin{split}
	S_{klm} \triangleq  &\int_0^L e^{-\alpha z} \int_{-\infty}^{\infty} \hspace{0.1cm} h^*(z,t) h^*(z,t-kT) \\ & h(z,t-lT)h(z,t-mT)\hspace{0.1cm}  \mathrm{d}t \mathrm{d}z,
	\label{eq:SMP_kernel}
 \end{split}
 \end{align}
 where with a slight abuse of notation, we used $h(z,t)$ to denote the solution of \eqref{eq:norm_manakov} when $\gamma=0$ and  $\vect{Q}(t,0)=h(t)$.
 
In general, \eqref{eq:SMP_kernel} could be strictly nonzero for all $(k,l,m) \in \mathbb{Z}^3$ (time-unlimited pulses) which would require the computation of an infinite number of kernels to evaluate \eqref{eq:single_ch_dual_pol_DRP}. However, in practice, due to the exponential decay of the kernel magnitude as a function of the 3D index squared magnitude $k^2+l^2+m^2$ \cite[Fig. 5]{Tao2011}, the sums in \eqref{eq:single_ch_dual_pol_DRP} can be truncated with limited loss in accuracy, yielding a finite memory channel model. In this work, we consider the following truncation
 \begin{align}
  \label{eq:single_ch_dual_pol_DRP_finite}
  \vect{r}_n &\approx \vect{a}_n + \Delta \vect{a}_n,
\end{align}
where
\begin{align}
\label{eq:single_ch_dual_pol_DRP_finite2}
  \Delta \vect{a}_n &\triangleq \jmath \frac{8}{9} \gamma E_s \sum_{(k,l,m)\in\mathcal{S}} \left(\vect{a}_{n+k}^\dagger \vect{a}_{n+l}\right)\vect{a}_{n+m}~S_{klm},
 \end{align}
 and 
 \begin{equation}\label{eq:DRP_2}
  \mathcal{S}\triangleq \{(k,l,m)\in\mathbb{Z}^3:-M\leq k,l,m \leq M\}.
 \end{equation}
Since the model in \eqref{eq:single_ch_dual_pol_DRP_finite} resembles the heuristic finite-memory channel model introduced in \cite{Agrell14}, we call it \textit{finite-memory} FRP. Accordingly, $M$ can be interpreted as the \emph{model's memory} size, and $2M+1$ defines the size of the \textit{interfering  window}. The symbols within this window are needed to compute the finite-memory model’s output \eqref{eq:single_ch_dual_pol_DRP_finite}. Henceforth, we refer to finite-memory FRP simply as FRP. 

 \section{Accuracy and limitations of FRP}\label{sec:frp_performance}
In order to study the limitations of the FRP, we first investigate a representative transmission scheme outlined within the 400ZR implementation agreement. In this section, we address the FRP drawbacks from a nonlinearity modeling perspective as opposed to focusing on the compensation approach, as it has been done in for example \cite{Kumar2019}. To the best of our knowledge, a thoughtful performance assessment has not been previously addressed from a modeling perspective. Hence we present a study of the loss of accuracy of FRP at high powers via a precise quantification of the discrepancies with reliable simulations of fiber propagation.

The study case considers a $L=120$~km standard single-mode fiber span, for a dual-polarization transmission in a single $60$~Gbd channel using $16$-QAM. Table \ref{table:simparam} summarizes the considered fiber parameters. The single-span transmission system constitutes the building block for the first stage of a multi-span EDFA system. When a multi-span transmission is considered, we do not expect the results to be qualitatively different from the ones observed in a single-span scenario. In a multi-span scenario, the model's mathematical formalism remains the same and that second-order effects are mainly affected by the transmitted power rather than transmission distance.

Throughout this paper, the standard split-step Fourier method (SSFM) is used for the simulation of fiber propagation according to \eqref{eq:norm_manakov}. The simulations are performed with a sampling rate equal to four times the transmission bandwidth and a uniform step size equal to $10$~m. Although ASE noise is neglected in Fig.~\ref{fig:system_model} and in \eqref{eq:norm_manakov}, in some of the results we consider an erbium-doped fiber amplifier (EDFA). In such cases, a $5$~dB noise figure is assumed. That system is used to set a baseline to study the performance of FRP around and beyond the optimum launch power. In addition, a root-raised-cosine (RRC) is chosen as pulse shape to numerically calculate the nonlinear coefficients in \eqref{eq:SMP_kernel}. Table \ref{table:simparam} summarizes the considered pulse and parameters.
\begin{table}[t!]
\def\arraystretch{0.9}
\caption{Fiber and pulse shape parameters}
\centering
{\footnotesize
\begin{tabular}{c c}
\toprule 
Nonlinear parameter $\gamma$        & $1.2 \: \textrm{W}^{-1} \textrm{km}^{-1}$ \\
Fiber attenuation $\alpha$          & $0.2 \: \textrm{dB/km}$ \\
Group velocity dispersion $\beta_2$ & $-21.7 \: \textrm{ps\textsuperscript{2}/km}$ \\
Pulse shape $h(t)$                  & Root-raised-cosine (RRC) \\
RRC roll-off factor                 & 0.01 \\
\bottomrule
\end{tabular}}
\label{table:simparam} 
\end{table}

In what follows, we introduce and discuss the results of four metrics used in this work to assess the performance of FRP. The first three metrics are average metrics, while the last one is a point-wise metric. Our general intention in this section is to characterize the well-known FRP drawbacks to set a baseline for the analysis of the results following the model optimization.

\subsection{Signal-to-noise ratio (SNR)}\label{sec:snr_def}
 Let $A_{\xpol/\ypol}$ and $R_{\xpol/\ypol}$ be complex random variables corresponding to the $\xpol/\ypol$ component of the transmitted and received symbols in Fig.~\ref{fig:system_model}, respectively. Throughout this paper we assume that the transmitted symbols are drawn from a polarization-multiplexed format, and thus, the 4D complex symbols are the Cartesian product of a constituent 2D complex constellation by itself. The support of the random variable $A_{\xpol/\ypol}$ is the constellation  $\mathcal{A}\subset\mathbb{C}$, given by $|\mathcal{A}|$ constellation points $\mathcal{A}\triangleq\{s_1,s_2,\ldots,s_{|\mathcal{A}|}\}$. 
 
 Let $\mu\in \mathbb{C}$ and $\sigma^2\in \mathbb{R}$ be two functions of the random variable $A_{\xpol/\ypol}$ representing the conditional mean and conditional variance of the general constellation point $A_{\xpol/\ypol}$, respectively. These quantities are defined as
 \begin{align}
 \label{eq:mu}
\mu (A_{\xpol/\ypol})       & \triangleq \mathbb{E} \{R_{\xpol/\ypol}|A_{\xpol/\ypol}\}, \\
\label{eq:sigma}
\sigma^2(A_{\xpol/\ypol})    & \triangleq \mathbb{E} \{|R_{\xpol/\ypol}-\mu(A_{\xpol/\ypol})|^2|A_{\xpol/\ypol}\}.
\end{align}
 We assume that $A_{\xpol/\ypol}$ is zero mean, and unit energy ($\mathbb{E}\{|A_{\xpol/\ypol}|^2\}=1$). Additionally, $A_{\xpol}$ and $A_{\ypol}$ are assumed independent. For reasons that will become clear in Sec. \ref{subsec:radii_phase}, it is assumed that $\mathcal{A}$ does not include $0 + \jmath 0$ and that it is a constellation with more than one symbol per ring. 

Using \eqref{eq:mu} and \eqref{eq:sigma}, we define the signal-to-noise ratio (SNR) as the average SNRs across the two polarizations, i.e.,
 \begin{align}
     \mathrm{SNR} & \triangleq \frac{1}{2} \left(\frac{\mathbb{E}\{|\mu(A_\xpol)|^2\}}{\mathbb{E}\{\sigma^2(A_\xpol)\}} + \frac{\mathbb{E}\{|\mu(A_\ypol)|^2\}}{\mathbb{E}\{\sigma^2(A_\ypol)\}}\right).
 \label{eq:snr_def}    
 \end{align}
 
\begin{figure}[!t]
    \centering 
    \resizebox{0.5\textwidth}{!}{\includegraphics{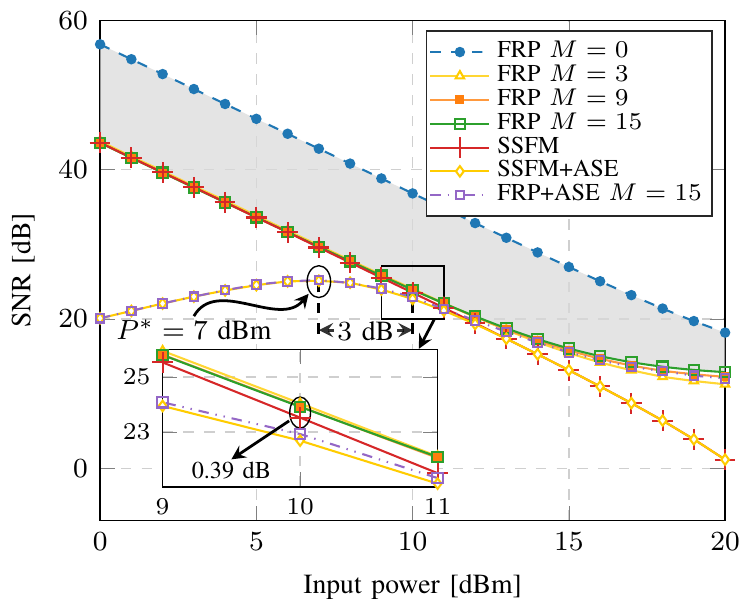}}
    \caption{The SNR in \eqref{eq:snr_def} as a function of launch power $P$ with (``+ASE'') and without ASE noise. Increments in memory size close the gap between SSFM and FRP in the absence of ASE noise for powers up to 10~dBm. The optimum launch power for the ``+ASE'' case $P^*=7$~dBm is shown.~The gray area delimits the region covered for $ M \in [0,15]$. The FRP prediction (with and without ASE) starts to fail at about $3$~dBm above $P^*$.} \label{fig:snr_vs_p}
\end{figure}

 Fig.~\ref{fig:snr_vs_p} shows the SNR in \eqref{eq:snr_def} obtained using SSFM and FRP for different input powers $P$ and three model-memory sizes. The gray area depicts the region covered by FRP with memory size $M \in[0,15]$. Fig.~\ref{fig:snr_vs_p} also includes results with ASE noise, for which $P^*=7$~dBm is found to be the optimum launch power.

 Fig.~\ref{fig:snr_vs_p} shows that already for $M> 1$, the SNR prediction of FRP matches SSFM within 0.5~dB (0.39 dB) in the linear and pseudo-linear regimes. This good fit is lost when $P>10$~dBm, a region where the nonlinear distortions are large. For powers above $10$~dBm, the SNR curves of FRP and SSFM begin to diverge. Therefore FRP's accuracy extends up to $3$~dB above the optimum launch power. The divergence observed above $10$~dBm implies that FRP is (a) underestimating the NLI, (b) it is making an inaccurate prediction of the conditional means, or (c) is doing both, (a) and (b), simultaneously. In the highly nonlinear regime (i.e., $P>15$~dBm), the FRP model shows a saturation trend that is as more visible as $M$ increases. This saturating behavior will be discussed in Sec.~\ref{subsec:radii_phase}.

 To understand the divergent curves in Fig.~\ref{fig:snr_vs_p}, a more qualitative comparison between FRP and SSFM is displayed in Fig.~\ref{fig:clouds}. Three scenarios are considered: 10~dBm for $M=5$ (a), 13~dBm for $M=5$ and $M=15$, (b) and (c) respectively. In Fig.~\ref{fig:clouds}~(a), a mismatch between the FRP (purple) and SSFM (red) is already noticeable even though the memory size considered ($M=5$) is relatively large. The constellation clouds, in this case, appear to have still similar average variance, but in the SSFM case show an \textit{extra} phase rotation not accounted for by FRP. In Fig.~\ref{fig:clouds}~(b) we show the comparison again for $M=5$ but at $P=13$~dBm ($6$~dB above $P^*$). The mismatch between the constellations in Fig.~\ref{fig:clouds}~(b) worsens. We note that the FRP clouds here are not simply rotated compared to SSFM, but also scaled (up). In Fig.~\ref{fig:clouds}~(b) we illustrate the rings where the conditional means of SSFM and FRP fall, and it is visible they do not overlap. Notice that the rings do not match and that the FRP ring has a radius larger than SSFM. Therefore, we conclude that FRP is making an inaccurate prediction of conditional means. Lastly, Fig.~\ref{fig:clouds}~(c) shows results at $P=13$~dBm when the model's memory size is increased from $M=5$ to $M=15$. Despite the increment in memory size, the mismatch persists and the radii discrepancy between SSFM and FRP enlarges (a characteristic further explored in Sec.~\ref{subsec:radii_phase}). This shows that the model's proximity to SSFM quickly saturates as a function of memory size. We further investigate the constellation mismatch based on the three metrics we introduce.
\begin{figure*}[!t]
\centering
\resizebox{0.32\textwidth}{!}{\includegraphics{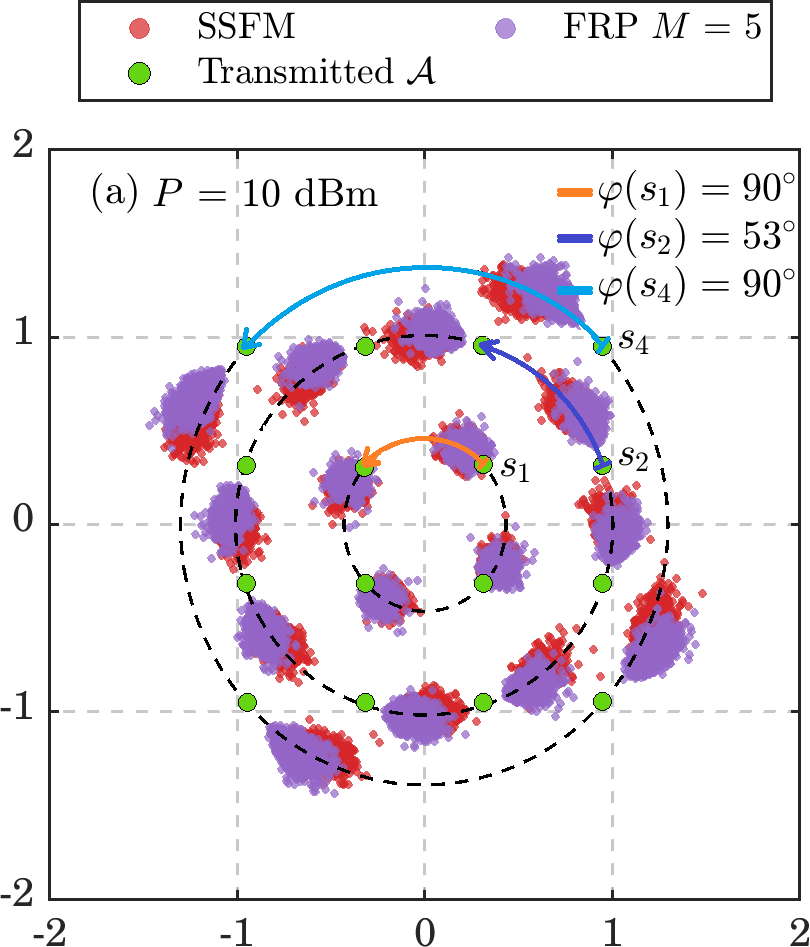}}
\hfill
\resizebox{0.32\textwidth}{!}{\includegraphics{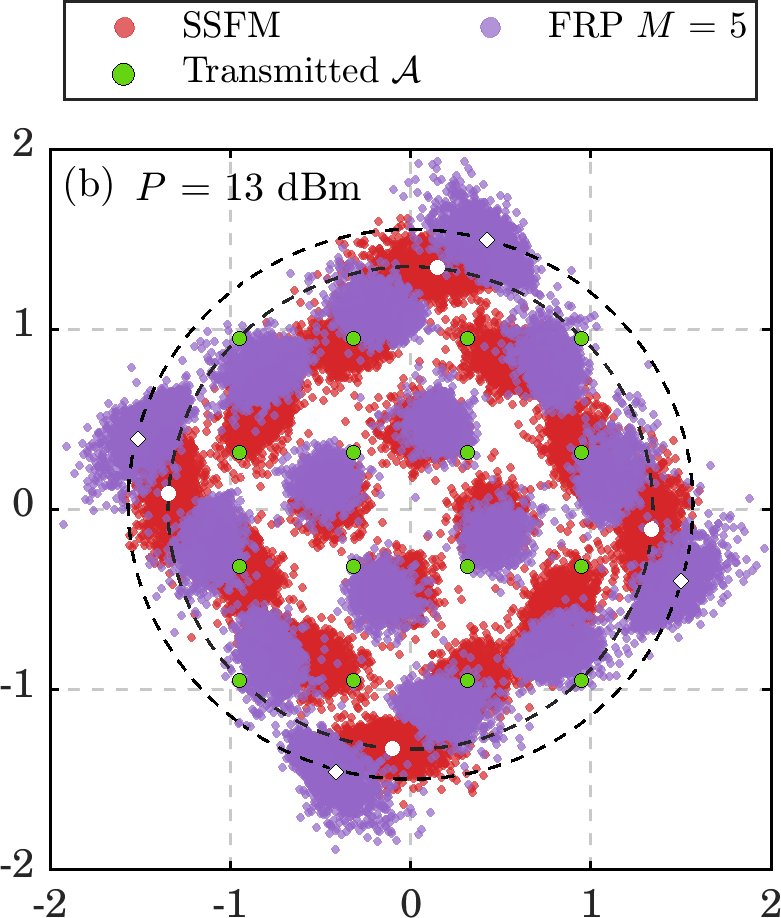}}
\hfill
\resizebox{0.32\textwidth}{!}{\includegraphics{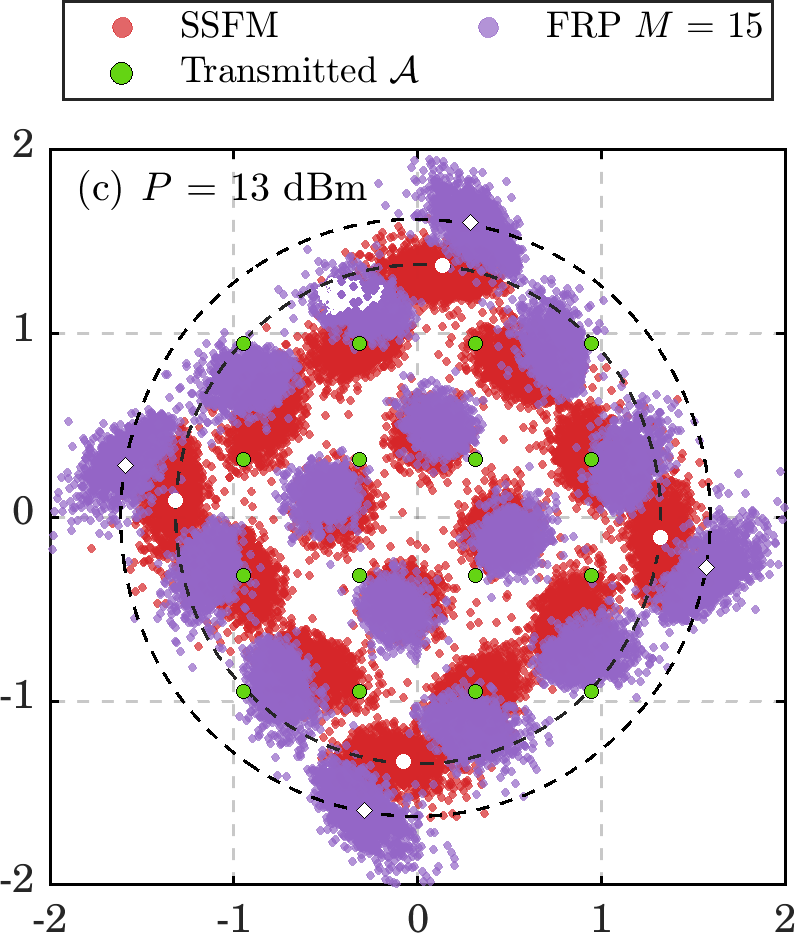}}
\caption{Received constellation diagrams for (a) $P=10$~dBm and FRP $M=5$, (b) $P=13$~dBm and FRP  $M=5$, and (c) $P=13$~dBm and FRP $M=15$. (b) and (c) show explicitly the radii where the SSFM and FRP conditional means fall on. Overall, a mismatch between the SSFM and FRP clouds is observed, most visibly in phase but also in scale. For $P=$13~dBm, the phase mismatch does not improve with memory size increments and the radii difference between SSFM and FRP broadens.} \label{fig:clouds}
\end{figure*}

\subsection{Radii and Phase difference}\label{subsec:radii_phase}
 To better assess the observed mismatch between the received constellations in Fig.~\ref{fig:clouds} and understand the reason for the SNR prediction mismatch shown in Fig.~\ref{fig:snr_vs_p}, we consider two metrics that quantify how different the conditional means ($\mu(s)$ in \eqref{eq:mu}) of SSFM and FRP are with respect to the transmitted constellation points ($\mathcal{A}$). The first metric we introduce is the normalized radii difference, defined as
 \begin{align} \label{eq:radii_ratio}
     \Delta r \triangleq \frac{1}{2}\left(\mathbb{E} \bigg\{ \frac{|\mu(A_\xpol)|-|A_\xpol|}{|A_\xpol|}\bigg\}  + \mathbb{E} \bigg\{ \frac{|\mu(A_\ypol)|-|A_\ypol|}{|A_\ypol|}\bigg\}\right).
 \end{align}
 The normalized radii difference is such that $-1 \leq \Delta r < \infty$. Three cases are of interest. When $\Delta r=0$, the conditional means perfectly match the transmitted symbols. When $\Delta r<0$, the magnitude the of conditional means of the received constellation are on average smaller with respect to the transmitted symbols. In other words, the constellation is ``compressed''. Conversely, when $\Delta r\geq 0$, the received constellation experiences an expansion. 
 
 Secondly, we compare the average phase rotation experienced by the conditional means of SSFM and FRP with respect to transmitted constellation points. The average phase difference is defined as
\begin{align}\label{eq:phase_diff}
    \Delta \phi \triangleq \frac{1}{2}\left(\mathbb{E} \bigg\{\frac{\angle \mu(A_\xpol) - \angle A_\xpol}{\varphi(A_\xpol)}\bigg\}  + \mathbb{E} \bigg\{ \frac{\angle \mu(A_\ypol) - \angle A_\ypol}{\varphi(A_\ypol)}\bigg\}\right),
\end{align}
 where $\angle$ denotes the \textit{angle of}, and $\varphi(A_{\xpol/\ypol})$ is the phase between two neighboring symbols within the same constellation ring, i.e., it is the minimum phase rotation required to transform a symbol $s_m \in \mathcal{A}$ into another one in the same ring. Fig.~\ref{fig:clouds}~(a) shows three instances (orange, blue, and light blue) of $\varphi(A_{\xpol/\ypol})$. The average phase difference $\Delta \phi$ is zero when the conditional means of the received constellation match in phase the transmitted symbols.

\begin{figure*}[!t]
\centering
\resizebox{0.49\textwidth}{!}{\includegraphics{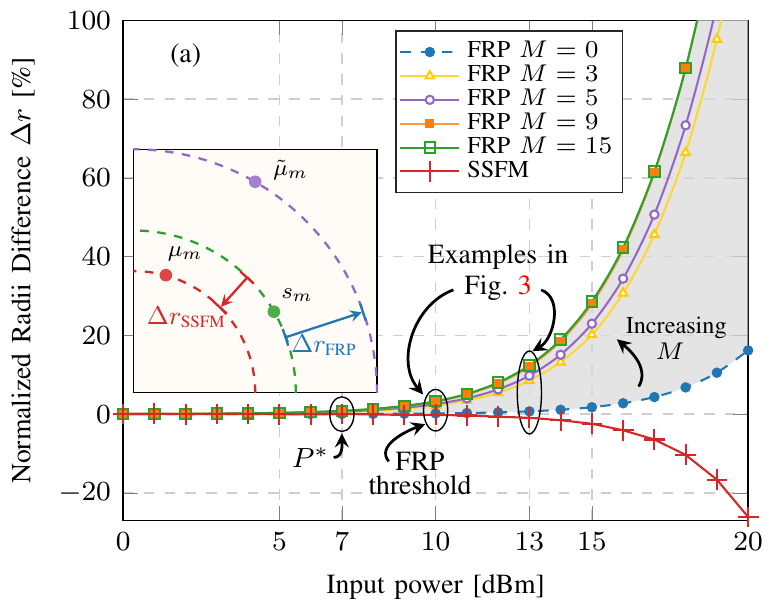}}
\hfill
\resizebox{0.49\textwidth}{!}{\includegraphics{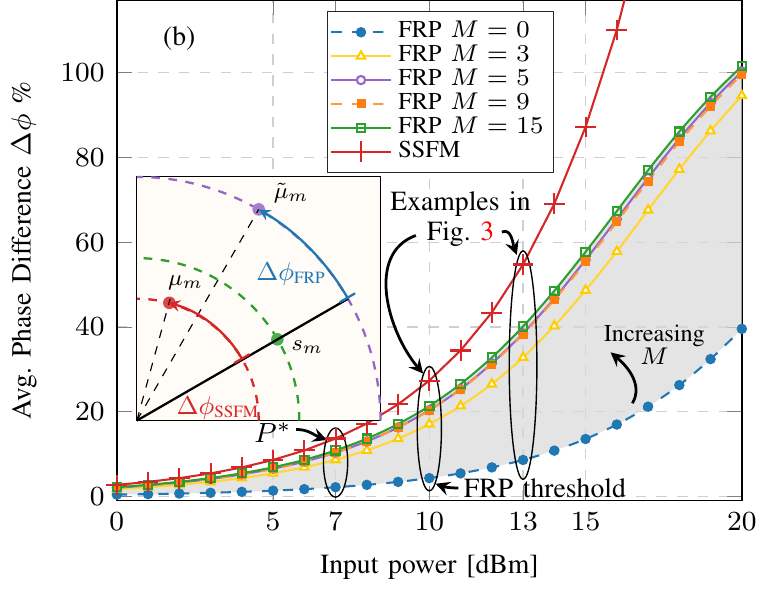}}
\caption{(a) Normalized radii difference as a function of launch power, and (b) phase difference as a function of power. The inset is both figures illustrate how $\Delta r$ and $\Delta \phi$ are defined with respect to a transmitted symbol $s_m$. Dotted lines within the insets depict the radii where the transmitted and received symbols lay. The gray area delimits the region covered for $M \in [0,15]$, and $M =3,5$, and $9$ are shown explicitly. (a) shows FRP moves away from SSFM, while (b) moves towards SSFM.} \label{fig:summary_radii_phase}
\end{figure*}

 Fig.~\ref{fig:summary_radii_phase}~(a) and (b) display the results for the normalized radii difference and average phase difference in \eqref{eq:radii_ratio} and \eqref{eq:phase_diff}, respectively. Illustrative insets have been included for a visualization of how the radii and phase differences are taken for a symbol $s_m$, where we have denoted the expected and model-predicted conditional means as  
 \begin{align}
 \label{eq:mu_SSFM}
    \mu_m &\triangleq  \mu^{\text{SSFM}}(A_{\xpol/\ypol}=s_m) = \mathbb{E} \{R^{\text{SSFM}}_{\xpol/\ypol}|A_{\xpol/\ypol}=s_m\}, \\
 \label{eq:mu_FRP}
    \tilde\mu_m & \triangleq  \mu^{\text{FRP}}(A_{\xpol/\ypol}=s_m)  = \mathbb{E} \{R^{\text{FRP}}_{\xpol/\ypol}|A_{\xpol/\ypol}=s_m\}.
\end{align}
 The tildes in \eqref{eq:mu_FRP} (and more generally throughout this paper) denote approximated values using FRP. The color coding in these insets follows Fig \ref{fig:clouds}. Since we examined $\Delta r$ and $\Delta \phi$ for multiple memory sizes, in Fig.~\ref{fig:summary_radii_phase}~(a) and (b) we highlighted in gray the region spanned by memory sizes $M \in [0,15]$, showing with an arrow the direction the result moves with increments in memory size.

 The results shown in Fig.~\ref{fig:summary_radii_phase}~(a) for $P>P^*$ suggest that, for both SSFM and FRP, the magnitude of the conditional means diverges from the magnitude of their corresponding transmitted constellation points, i.e., $\Delta r \neq 0$. The way $\Delta r$ diverges from zero is, however, not the same for SSFM and FRP. On one hand, the SSFM results in Fig.~\ref{fig:summary_radii_phase}~(a) show a compression  ($\Delta r_{\mathrm{SSFM}}<0$). This compression is due to the energy-preserving nature of the Manakov equation \eqref{eq:norm_manakov} \cite{Kramer2015} and, thus, the energy associated with the increasing NLI variance must be balanced by a reduction in the average energy of the $\mu$ terms. This effect is only significant in the highly nonlinear regime (beyond $13$~dBm). On the other hand, the FRP results in Fig.~\ref{fig:summary_radii_phase}~(a) suggest an expansion ($\Delta r_{\mathrm{FRP}}>0$). Thus, unlike SSFM, FRP is not energy-preserving. The phase difference results in Fig.~\ref{fig:summary_radii_phase}~(b) show that the gap between SSFM and FRP increases with increments in power. Although this gap narrows with increments in memory size, it saturates at about $M=5$. The residual gap is, thus, to be attributed to second-order effects.  
 
 Overall, the results shown in Fig.~\ref{fig:summary_radii_phase} reveal the FRP's inaccuracy for powers beyond the optimum launch power. Above $10$~dBm, the system operates in the nonlinear regime, breaking the FRP hypothesis. Once the $10$~dBm is crossed, FRP yields imprecise predictions of the output signal power and the total nonlinear phase rotation. As shown in Fig.~\ref{fig:clouds} (b)-(c), this result does not exhibit improvement with memory size increments and leads to an inaccurate prediction of SNR at high powers. The following theorem shows why FRP yields imprecise predictions of the average amplitude of the conditional means and total nonlinear phase rotation. 
    \begin{theorem}\label{T:conditional_mean}
     The conditional mean in \eqref{eq:mu_FRP} for the finite-memory FRP model can be expressed as
     \begin{align}\label{eq:cond_mean_frp}
        \begin{split}
            \tilde \mu_m  & =  s_m +  \jmath \frac{8}{9}\gamma E_s \Big[s_m(1+ |s_m|^2) S_{000} \\
            &+\sum_{k\in\mathcal{W}} s_m (2S_{kk0}+ S_{k0k}) + s_m^* \mathbb{E}\{A^2_{\xpol/\ypol}\} S_{0kk}\Big],
        \end{split} 
     \end{align}
     where $\mathcal{W}\triangleq \{i\in[-M,M]\setminus \{0\}\}$.
     \end{theorem}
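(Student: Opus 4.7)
The plan is to take the conditional expectation of the $\xpol$-polarization component of \eqref{eq:single_ch_dual_pol_DRP_finite} given $A_\xpol=s_m$, and to identify which summation indices (which I will call $(k,l,m')$ to avoid clashing with the constellation-label $m$) produce non-vanishing contributions to the triple sum in \eqref{eq:single_ch_dual_pol_DRP_finite2}. The statistical facts I would exploit are that $A_\xpol$ and $A_\ypol$ are independent and identically distributed (across time) with zero mean and unit energy, so that conditioning on $a_{\xpol,n}=s_m$ fixes that single symbol while leaving every $a_{\xpol,j}$ with $j\ne n$ and every $a_{\ypol,j}$ unconditionally distributed.

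First, I would expand the Hermitian inner product $\vect a_{n+k}^{\dagger}\vect a_{n+l}=a_{\xpol,n+k}^{*}a_{\xpol,n+l}+a_{\ypol,n+k}^{*}a_{\ypol,n+l}$ so that each summand of the x-polarization perturbation splits into an x-contribution $T_x=a_{\xpol,n+k}^{*}a_{\xpol,n+l}a_{\xpol,n+m'}$ and a cross-contribution $T_y=a_{\ypol,n+k}^{*}a_{\ypol,n+l}a_{\xpol,n+m'}$. For $T_y$, independence of the two polarizations factorizes the expectation; $\mathbb{E}\{a_{\ypol,n+k}^{*}a_{\ypol,n+l}\}=\delta_{kl}$ forces $k=l$, and $\mathbb{E}\{a_{\xpol,n+m'}\mid A_\xpol=s_m\}$ is nonzero only when $m'=0$ (in which case it equals $s_m$). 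Hence $T_y$ contributes $s_m S_{k,k,0}$ and nothing else.

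Second, for $T_x$ I would classify $(k,l,m')$ by how many of its entries are zero. The all-zero case yields $|s_m|^{2}s_m S_{000}$. Configurations with exactly one zero index give three families, of which only those that pair the two random factors into a unit-energy product $a^{*}a$ or a pseudo-variance product $a\cdot a$ survive: $(k=l\neq 0,m'=0)$ yields $s_m S_{kk0}$; $(k=m'\neq 0,l=0)$ yields $s_m S_{k0k}$; and $(l=m'\neq 0,k=0)$ yields $s_m^{*}\,\mathbb{E}\{A_\xpol^{2}\}\,S_{0kk}$. Configurations with two zero indices reduce to $|s_m|^{2}$ or $s_m^{2}$ multiplying a single zero-mean factor and therefore vanish, and configurations with no zero indices either decompose into three independent zero-mean factors or reduce to an odd-order moment such as $\mathbb{E}\{|A_\xpol|^{2}A_\xpol\}$, which vanishes for the QAM-type constellations assumed in Section III.

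Finally, I would collect the surviving contributions by kernel index: the $T_x$ and $T_y$ contributions at $(k,k,0)$ combine into $2s_m S_{kk0}$; the $(k,0,k)$ and $(0,k,k)$ terms are of purely $\xpol$-origin; and the $(0,0,0)$ residuals from $T_x$ and $T_y$ add to $s_m(1+|s_m|^{2})S_{000}$ after using $\mathbb{E}\{|a_{\ypol,n}|^{2}\}=1$. Restoring the prefactor $\jmath\frac{8}{9}\gamma E_s$ and restricting the nonzero indices to $\mathcal{W}$ gives exactly \eqref{eq:cond_mean_frp}. The main obstacle is not any single calculation but the completeness of the case analysis: one has to be sure that no triple $(k,l,m')$ produces a non-standard contribution that has been overlooked, and in particular that the sole surviving second-order-moment-of-$A_\xpol$ term is the $\mathbb{E}\{A_\xpol^{2}\}$ one kept in the statement, with all third-order moments safely absent by the symmetry of the constellation.
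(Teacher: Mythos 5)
Your proof is correct and follows essentially the same route as the paper's Appendix~\ref{app:condmean}: condition on $A_{\xpol,0}=s_m$, enumerate the index triples according to which entries are zero or coincide, and use independence and the zero-mean assumption so that only the $(0,0,0)$, $(k,k,0)$, $(k,0,k)$, and $(0,k,k)$ configurations survive, with the cross-polarization part supplying the second copy of $S_{kk0}$ and the $+1$ in the $S_{000}$ coefficient, exactly as in \eqref{eq:cond_mean_frp}. The only noteworthy difference is that you explicitly dispose of the all-nonzero triples with repeated indices (in particular $k=l=m'\neq 0$, which requires $\mathbb{E}\{|A_\xpol|^2A_\xpol\}=0$ from the constellation symmetry), a configuration the paper's four-case enumeration (whose Case~I assumes all indices distinct) passes over silently, so your case analysis is, if anything, slightly more complete.
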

     \begin{proof}
        See Appendix~\ref{app:condmean}.
     \end{proof}
\begin{figure}[!t]
    \centering 
    \resizebox{0.48\textwidth}{!}{\includegraphics{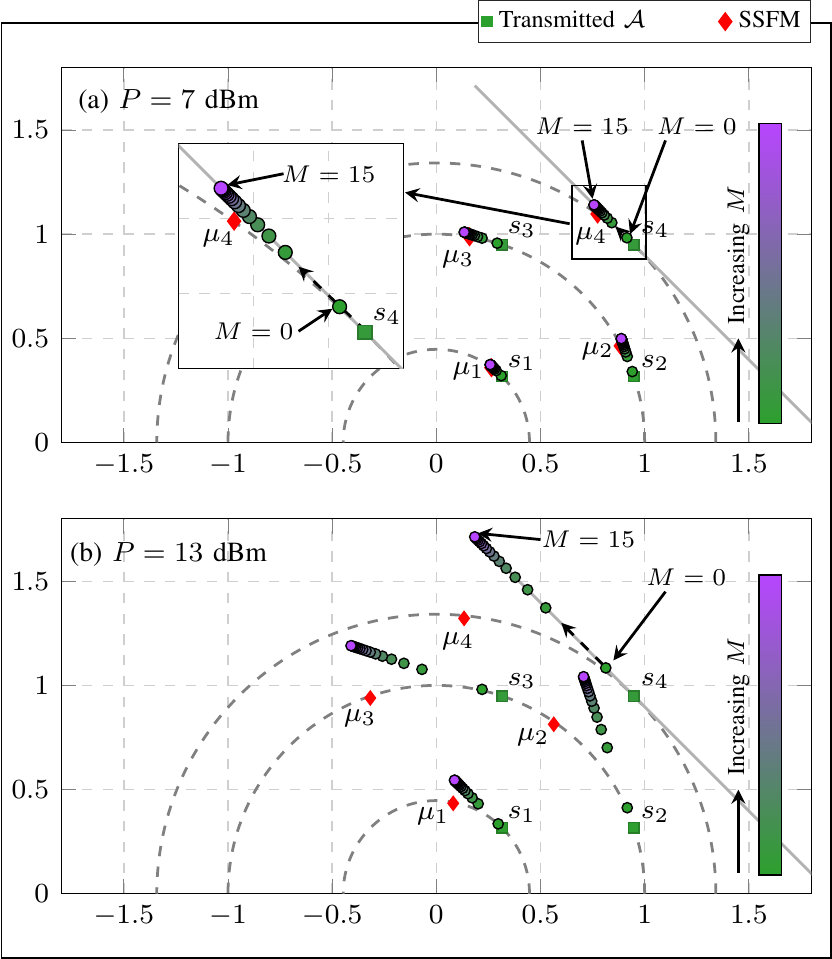}}
    \caption{The evolution with increasing memory $M$ for $\tilde\mu$ in \eqref{eq:cond_mean_frp} is depicted with a color gradient. We consider the four symbols $\{s_1,s_2,s_3,s_4\}$ within the first quadrant of $16$-QAM, and two powers: (a) $P=7$~dBm ($P^*$) and (b) $P=13$~dBm (example Fig.~\ref{fig:clouds} (c)). The rings where $\{s_m\}$ lay are drawn and ${\mu_m}$ (red diamonds)  denote the SSFM conditional means for $s_m$. It can be observed that overall $\{\tilde\mu_m\}$ diverge from the $\{s_m\}$ rings, and they do not match $\{\mu_m\}$ for either (a) nor (b).} \label{fig:mu_trajectory}
\end{figure}
Theorem~\ref{T:conditional_mean} shows that when $E_s \to 0$, then $\tilde \mu_m \to  s_m $, which is expected because $E_s \to 0$ implies that the system operates in the linear regime. More generally, Theorem~\ref{T:conditional_mean} shows that the received symbol has a conditional mean that in general has a different phase and magnitude than the transmitted symbol, see the r.h.s. of \eqref{eq:cond_mean_frp}. In Fig.~\ref{fig:mu_trajectory} we display a numerical evaluation of \eqref{eq:cond_mean_frp} for four representative constellation symbols and multiple values of $M$. Two powers are considered: optimum launch power $P^*$ and $13$~dBm, shown in Fig.~\ref{fig:mu_trajectory} (a) and (b), respectively. It can be observed that for either power case, the FRP conditional means do not converge to the SSFM symbols (red diamonds). This first characteristic is more visible at $13$~dBm than at $7$~dBm. Furthermore, increments in memory do not lead to an improvement. A second noticeable characteristic is that the evolution with memory is defined over a tangent line crossing on $s_m$; and that for high powers, $\mu_m$ are more distant to the tangent. These two characteristics have the following consequences: (i) for all $m$ $\tilde \mu_m>|s_m|$, meaning that $\tilde \mu$ has a different energy than $s_m$; and (ii) increasing the power increases positively the radii difference between the rings where $s_m$ and  $\tilde\mu_m$ are, increasing the difference in energy between $\mu$ and $\tilde\mu$. Therefore, we can conclude that the result in Theorem~\ref{T:conditional_mean} justifies why FRP is non-energy preserving. The results in Fig.~\ref{fig:mu_trajectory} show that FRP is overestimating the received signal power.  

\subsection{Relative error}\label{subsec:RE}
 We further characterize the FRP accuracy by examining the relative error between FRP and SSFM predictions. The two metrics previously studied were based on the behavior of the conditional means, while the relative error better conveys the symbol-wise behavior. The relative error is defined as
 \begin{align} \label{eq:relateverror}
     \varepsilon  \triangleq \sqrt{\frac{1}{2}\left(
     \frac{\mathbb{E}\{|R_\xpol-\tilde{R_\xpol}|^2\}}{\mathbb{E}\{|R_\xpol|^2\}} + \frac{\mathbb{E}\{|R_\ypol-\tilde{R_\ypol}|^2\}}{\mathbb{E}\{|R_\ypol|^2\}}\right)},
 \end{align}
 where $R_{\xpol/\ypol}$ is associated to the the SSFM received symbols in $\xpol/\ypol$ polarization, and $\tilde R_{\xpol/\ypol}$ to the corresponding FRP prediction. $\varepsilon$ is positive-valued and is zero when there is a pointwise match between the model and the SSFM output. 
\begin{figure}[!t]
    \centering 
    \resizebox{0.5\textwidth}{!}{\includegraphics{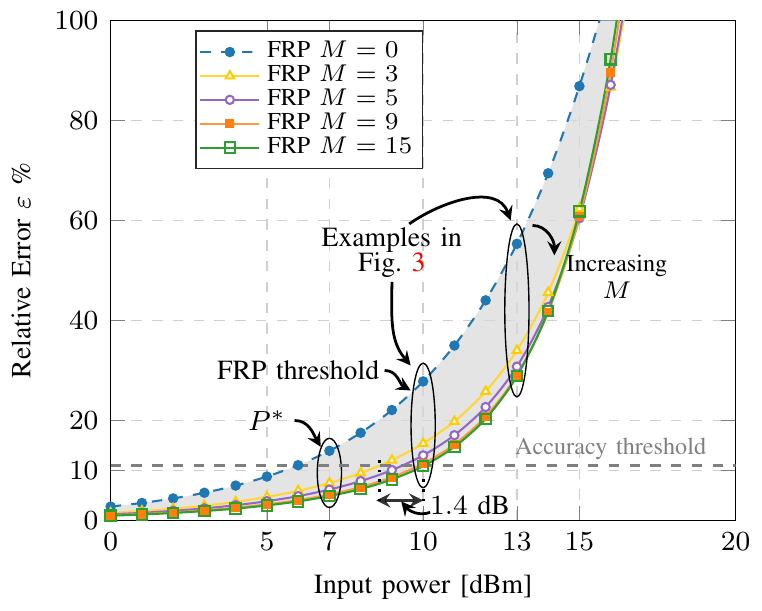}}
    \caption{Relative error of FRP with respect to SSFM in \eqref{eq:relateverror} as a function of power. The gray area delimits the region covered by $M \in [0,15]$. $M =0, 3,5$, $9$, and $15$ are shown explicitly. Overall, increments in memory size lead $\varepsilon$ to decrease. The values of $\varepsilon$ worsen for increments in power, and it reaches very high values for $P>16$~dBm.} \label{fig:relative_error}
\end{figure}

 In Fig.~\ref{fig:relative_error} we plot $\varepsilon$ versus the launch power.  For $P<P^*+3$~dB, $\varepsilon$ in Fig.~\ref{fig:relative_error} takes values below $20$\% for large enough memory sizes ($M\geq3$). At $P=10$~dBm, the FRP model achieves $\varepsilon = 11\%$ for memory size $M=9$ and  $M=15$. From $M=9$ to $M=15$ FRP does not show significant improvement. In this work, we measure the accuracy via $\varepsilon$ and use a threshold of $11$\%, which is the value reached by $M=15$, which is the largest  memory considered in this study. From now on, we shall say that a model is precise if it has $ \varepsilon \leq 11\%$. Fig.~\ref{fig:relative_error} shows that FRP guarantees good accuracy up to $10$~dBm from a memory size of $M=9$. Above $10$~dBm, $\varepsilon$ rapidly increases, and significant increments in memory are demanded to keep the desired accuracy. Consequently, $10$~dBm sets an upper bound for the validity region of FRP in terms of $\varepsilon$. Fig.~\ref{fig:relative_error} shows that above this threshold, increasing $M$ leads to small improvements of $\varepsilon$ with respect to $M=3$, and it begins to saturate at $\approx14$~dBm. It can also be observed in Fig.~\ref{fig:relative_error} that with $M=3$ the accuracy threshold is hit at $P=8$~dBm. It is possible to prolong $1.5$~dB the reached accuracy by increasing the memory from $M=3$ (343 kernels) to $M=15$ (29 791 kernels), meaning that 29 448 kernels more are required to prolong $1.4$~dB the FRP accuracy reached with $M=3$. Notice that for $P=13$~dBm, the example considered in Fig.~\ref{fig:clouds} (b) and (c), the relative error is already quite high ($\approx 30$\%), showing that even though the other two metrics in Fig.~\ref{fig:summary_radii_phase} (a) and (b) are not too pessimistic at this power, $\varepsilon$ is revealing an inaccurate FRP prediction. 

\section{Data-driven estimation of FRP kernels} \label{sec:enhanced_FRP}
 In this section, we study the performance of a data-driven method to optimize the SPM kernels $S_{klm}$. The optimization provides an effective FRP model tailored to extend the model validity beyond the pseudo-linear regime. This data-driven method is an alternative to computing the kernels via their integral form in \eqref{eq:SMP_kernel}. The estimation method relies on a gradient-based optimizer that minimizes the average quadratic error between a set of true-transmission outputs $\vect{r}_1,\cdots, \vect{r}_B$ and the output $  \hat{\vect{r}}_1,\cdots,  \hat{\vect{r}}_B$ determined by the FRP model in \eqref{eq:single_ch_dual_pol_DRP_finite} parameterized with respect to the kernels. In this paper, the true-transmission outputs are generated via SSFM. However, the estimation method can also be used with experimental data.  

Normalized gradient descent (NGD) is a popular enhancement to the standard gradient descent algorithm \cite{Watt}. NGD is specifically designed to ameliorate the vanishing behavior of the magnitude of the negative gradient near stationary points. There are two common approaches to the normalization of the gradient: (i) normalizing the full gradient magnitude, and (ii) normalizing the magnitude component-wise. The latter is considered in this work to define our gradient-based optimizer. In general, the $l$-th NGD descent step is given by
\begin{equation}\label{eq:norm_gd}
    \vect{w}^{(l+1)} = \vect{w}^{(l)} - \alpha \nabla f(\vect{w}^{(l)})\oslash|\nabla f(\vect{w}^{(l)})|,
\end{equation}
where $f: \mathbb{C}^L \mapsto \mathbb{R}$  is a scalar parameterized objective function to be minimized with respect to an $L$-dimensional vector of complex parameters $\vect{w}$. In \eqref{eq:norm_gd} we abuse the notation of the absolute value to refer to an element-wise absolute value. The superscripts $l$ and $l+1$ are integers referring to the current and future NGD stages respectively, and $\alpha$ is the \textit{step} size. If $||\vect{w}^{(l+1)}-\vect{w}^{(l)}||<\tau$, with $\tau$ being a predefined threshold, we say that the NGD has converged. 

To predict the received symbol $\vect{r}_n$ via FRP in \eqref{eq:single_ch_dual_pol_DRP_finite}--\eqref{eq:DRP_2}, the transmitted symbol $\vect{a}_n$, its $2M$ neighbors $\vect{a}_{n-M},\ldots, \vect{a}_{n-1},\vect{a}_{n+1},\ldots,\vect{a}_{n+M}$, and the SPM kernels $S_{klm}$ are required. Conversely, when a transmission pair $(\vect{a}_n,\vect{r}_n)$, and the neighboring symbols of $\vect{a}_n$ are known, then \eqref{eq:single_ch_dual_pol_DRP_finite} yields two linear equations, one per polarization, for which the SPM kernels are \textit{unknowns}. The number of unknowns given a memory size $M$ is 
\begin{equation}\label{eq:numk}
L=(2M+1)^3.   
\end{equation}
Given that two equations are insufficient to determine $L$ kernels, a \textit{batch} of these linear equations must be considered. Our optimizer operates over a batch of data and uses the normalization introduced in \eqref{eq:norm_gd}. Thus, from this point on, we call it normalized batch gradient descent (NBGD). 
\begin{figure}[!t]
    \centering 
    \resizebox{8.5cm}{!}{\includegraphics{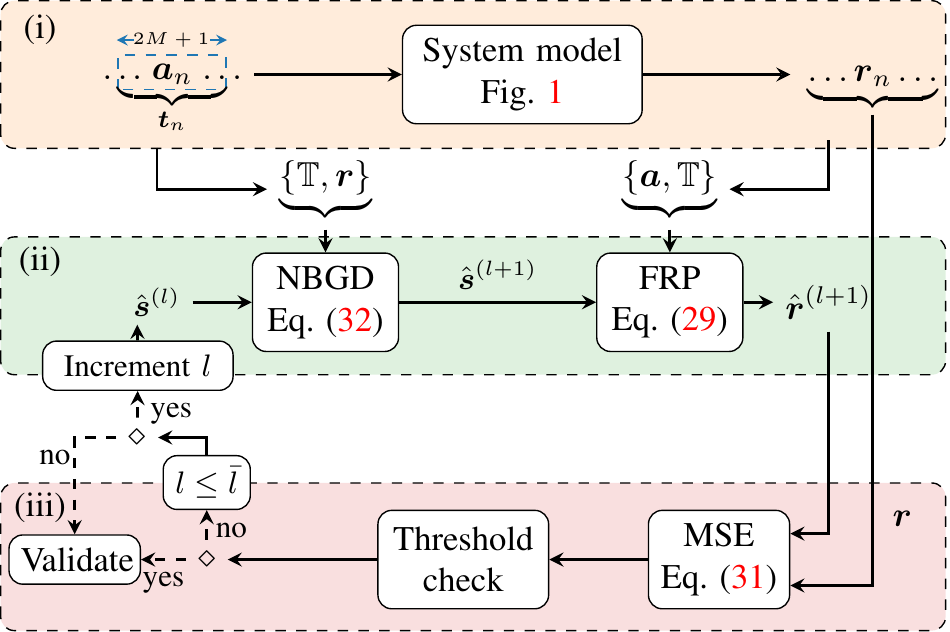}}
    \caption{The NBGD block diagram represents three stages for a single sample, (i) data collection, (ii) gradient step update, and (iii) gradient convergence assessment.} 
        \label{fig:NBGD_sketch}
\end{figure}

Fig.~\ref{fig:NBGD_sketch} shows a summary of our NBGD estimation method. The block diagram describes three stages: (i) data collection, (ii) optimizer update, and (iii) convergence assessment. In the data collection stage, transmission data is generated according to the system model in Fig.~\ref{fig:system_model} using SSFM, as previously indicated in Sec~\ref{sec:frp_performance}. A sliding window is used in the transmitted sequence $\underline{\vect{a}}$ to select a batch of $B$ symbols $\vect{a}_n$ and their $2M$ corresponding neighbors. The window has a size of $2M+1$, with $M$ being the FRP model's memory size. We refer to $B$ as the \textit{batch size}. Similarly, a sliding window is used in the received sequence $\underline{\vect{r}}$ to select an output batch of $B$ symbols $\vect{r}_n$.

The $\xpol$ and $\ypol$ components of \eqref{eq:single_ch_dual_pol_DRP_finite} are given by
\begin{align}
    r_{\xpol/\ypol,n} & \approx a_{\xpol/\ypol,n} +\jmath \frac{8}{9} \gamma E_s \sum_{(k,l,m) \in \mathcal{S}}T_{\xpol/\ypol,klm} \cdot S_{klm},
\label{eq:DRP_finite_x_y}
\end{align}
where 
\begin{align}\label{eq:Ttriplets}
    \begin{split}
        T_{\xpol,klm} & = (a_{\xpol,n+k}^* a_{\xpol,n+l}+a_{\ypol,n+k}^* a_{\ypol,n+l})a_{\xpol,n+m},\\
        T_{\ypol,klm} & = (a_{\ypol,n+k}^* a_{\ypol,n+l}+a_{\xpol,n+k}^* a_{\xpol,n+l})a_{\ypol,n+m}.
    \end{split}
\end{align}

From this point on we call \eqref{eq:Ttriplets} \emph{triplets}. Knowing the triplets and transmission pair $(\vect{a}_n,\vect{r}_n)$ in \eqref{eq:DRP_finite_x_y}, a total of $L$ kernels must be optimized. To do so, either ($\xpol$ or $\ypol$) or both polarizations ($\xpol$ and $\ypol$), can be used. In practice considering both polarization would be the most efficient way to make use of the transmission resources. However, to simplify the notation, we describe here the NBGD optimizer in terms of a single polarization. Thus, the subscript $\xpol/\ypol$ will be omitted from this point on.

By shaping the kernels $S_{klm}$ to be an $L\times 1$ vector $\vect{s}$, and generating an $L \times 1$ vector of triplets $\vect{t}_{n}$ corresponding to $a_{n}$, we vectorize an arbitrary polarization component of \eqref{eq:DRP_finite_x_y} as
\begin{equation}\label{eq:vector_DRP}
    r_{n} \approx a_{n} +\jmath \frac{8}{9} \gamma E_s \vect{t}_{n}^\mathrm{T} \vect{s},   
\end{equation}
 where $\vect{t}_{n}$ and $\vect{s}$  are respectively
    \begin{align}\label{eq:trip_vec}
        \vect{t}_{n}^\mathrm{T}  \triangleq & 
         \bigl( T_{-M,-M,-M},T_{-M,-M,-M+1},T_{-M,-M,-M+2}, \cdots \\ \nonumber
         &T_{-M,-M,M},T_{-M,-M+1,-M}, \cdots T_{M,M,M}\bigr),
    \end{align}
    and 
    \begin{align} \label{eq:kernels_vec}
         \vect{s}^\mathrm{T} \triangleq &        
         \bigl( S_{-M,-M,-M}, S_{-M,-M,-M+1}, S_{-M,-M,-M+2}, \cdots \\ \nonumber
         & S_{-M,-M,M}, S_{-M,-M+1,-M}, \cdots S_{M,M,M}\bigr).
    \end{align}
\begin{example}
 For $M=1$, $-1\leq k,l,m, \leq 1$. In view of \eqref{eq:trip_vec} and \eqref{eq:kernels_vec}, the vectorized reduction in \eqref{eq:vector_DRP} requires the following $27\times 1$ vectors
     \begin{align}\label{eq:trip_vec_M1_example}
        \vect{t}_{n} = \begin{pmatrix}
         T_{-1, -1, -1}\\
         T_{-1,-1,0}\\
        \vdots\\
        T_{1,1,1}
        \end{pmatrix}, \hspace{0.1cm}
         \vect{s} = \begin{pmatrix}
         S_{-1,-1,-1}\\
         S_{-1,-1,0}\\
        \vdots\\
        S_{1,1,1}
        \end{pmatrix},
    \end{align}
and the triple sum in \eqref{eq:DRP_finite_x_y} is reduced to
    \begin{align}
        \sum_{k=-1}^1 \sum_{l=-1}^1\sum_{m=-1}^1 T_{klm} \cdot S_{klm} = \vect{t}_{{n}}^\mathrm{T} \vect{s}.
    \end{align}
\end{example}
 The expression \eqref{eq:vector_DRP} corresponds to the scenario when one polarization of a single input symbol $a_n$ and its triplets are used in \eqref{eq:single_ch_dual_pol_DRP_finite}. Considering that the model in \eqref{eq:single_ch_dual_pol_DRP_finite} is assumed to be stationary, we can take any instant of time as a reference to filling the input vectors of the optimizer. Without loss of generality, we take $n=0$. 
 
 To generate FRP outputs for a batch of samples using a single operation, we introduce a vectorization of the input-output pairs within the batch 
    \begin{align}\label{eq:in_vec}
        \vect{a} \triangleq (a_0, \cdots, a_B)^\mathrm{T},
    \end{align}
    \begin{align}\label{eq:outputs_vec}
        \vect{r} \triangleq (r_0,\cdots, r_B)^\mathrm{T},
    \end{align}
and a corresponding $B \times L$ matrix of triplets
\begin{align}\label{eq:triplets_library}
    \mathbb{T} \triangleq \Bigl(  \vect{t}_1^\mathrm{T}, \cdots, \vect{t}_B^\mathrm{T} \Bigr)^\mathrm{T}.
\end{align}
 In the data collection stage shown in  Fig.~\ref{fig:NBGD_sketch} (i), we summarize the notation defined in \eqref{eq:in_vec}, \eqref{eq:outputs_vec}, and \eqref{eq:triplets_library}. 
 
 Following the definitions in \eqref{eq:in_vec}--\eqref{eq:triplets_library}, the single polarization FRP prediction for a batch can be written as
 \begin{equation}\label{eq:channel_model_vector}
    \vect{r} \approx \vect{a} +\jmath\frac{8}{9} \gamma E_s \mathbb{T} \vect{s}, 
 \end{equation}
 where $\vect{s}$ is the vectors of kernels in \eqref{eq:kernels_vec}.
 
 \begin{example}
 For $M=1$ and $B=3$, three transmission pairs ${(a_0,r_0),(a_1,r_1),(a_2,r_2)}$ and their triplets $\vect{t}_{0}$, $\vect{t}_{1}$, $\vect{t}_{2}$, are considered. The matrix $\mathbb{T}$ in \eqref{eq:triplets_library} is therefore,  
 \begin{equation*}\label{eq:trip_vec_M1_example_b}
    \mathbb{T} = \Bigl(\vect{t}_0^\mathrm{T}, \vect{t}^\mathrm{T}_{1}, \vect{t}_2^\mathrm{T} \Bigr)^\mathrm{T} = 
    \setstacktabbedgap{1.5pt}
    \parenMatrixstack{
     t_{0,1}&t_{0,1}&\cdots&t_{0,27} \cr
     t_{1,1}&t_{1,2}&\cdots&t_{1,27} \cr
     t_{2,1}&t_{2,1}&\cdots&t_{2,27}},
\end{equation*}
where $t_{n,l}$ is the $l$-th triplet corresponding to $a_n$. Then, \eqref{eq:channel_model_vector} yields for this example 
\begin{align*}
    \begin{pmatrix}
     r_0\\
     r_1\\
     r_2
    \end{pmatrix} \approx 
    \begin{pmatrix}
     a_0\\
     a_1\\
     a_2
    \end{pmatrix} +
    \jmath\frac{8}{9} \gamma E_s    
   \setstacktabbedgap{1.5pt}
    \parenMatrixstack{
     t_{0,1}&t_{0,1}&\cdots&t_{0,27} \cr
     t_{1,1}&t_{1,2}&\cdots&t_{1,27} \cr
     t_{2,1}&t_{2,1}&\cdots&t_{2,27}}
    \begin{pmatrix}
     S_{-1,-1,-1}\\
     \vdots\\
     S_{1,1,1}
    \end{pmatrix}.
\end{align*}
\end{example}
 
 Between stages (i) and (ii), Fig.~\ref{fig:NBGD_sketch} shows explicitly that $\{\vect{a},\mathbb{T}, \vect{r}\}$ are the outputs of the data collection block (i) that are taken by the optimizer update stage. During the optimizer update, there are two sub-stages. First, an NBGD step towards minimizing the objective function is taken. Second, the FRP prediction is calculated using the NBGD kernels output of the optimization step. The latter is obtained through \eqref{eq:channel_model_vector} using
 \begin{equation}\label{eq:FRP_iteration}
    \hat{\vect{r}}^{(l+1)} = \vect{a} + \jmath\frac{8}{9} \gamma E_s \mathbb{T} \hat{\vect{s}}^{(l+1)}.  
 \end{equation}
 The hats in \eqref{eq:FRP_iteration} (and more generally throughout this paper) denote approximated values using the NBGD algorithm. 
 
 We consider as an objective function the mean squared error (MSE), defined as 
 \begin{equation}\label{eq:objective}
    \text{MSE}(\hat{\vect{r}},\vect{r}) \triangleq \frac{1}{2B} ||\hat{\vect{r}}-\vect{r} ||^2.
 \end{equation}
 In \eqref{eq:objective}, $\vect{r}$ denotes the true output batch, while $\hat {\vect{r}}$ corresponds to the FRP prediction using the NBGD algorithm. A factor of $\frac{1}{2}$ is added in the MSE definition in \eqref{eq:objective} to simplify its gradient expression, see App.~\ref{app:wirtinger}. We stress that the definition of $\text{MSE}(\cdot)$ in \eqref{eq:objective} carries an implicit dependency on the optimized SPM kernels through $\hat{\vect{r}}$ in \eqref{eq:FRP_iteration}. 

 Now that we have introduced the variables, the batch vectorized notation, the objective function, and the overall schedule of the proposal optimizer, we provide the analytical expression to compute the optimization step for the NBGD (see \eqref{eq:norm_gd}). This expression is given in the following theorem.    
 
 \begin{theorem}\label{T:NBGD}
 The NBGD descent step for the objective function in \eqref{eq:objective} is\footnote{Note that $|\cdot|$ in the r.h.s. of \eqref{eq:NBGD_main} denotes again an element-wise absolute value.}
 \begin{align}\label{eq:NBGD_main}
   \hat{\vect{s}}^{(l+1)} = \hat{\vect{s}}^{(l)}+ \jmath \alpha \mathbb{T}^{\dagger}(\hat{\vect{r}}^{(l)}-\vect{r})\oslash|\mathbb{T}^{\dagger}(\hat{\vect{r}}^{(l)}-\vect{r})|,
 \end{align}
 where $\hat{\vect{s}}^{(l)}$ is the kernels' vector at iteration $l$, $\alpha$ is the step size, $\mathbb{T}$ is the $B \times L$ matrix of triplets in \eqref{eq:triplets_library}, $\vect{r}$ is the vector of true outputs, and $\hat{\vect{r}}^{(l)}$ is the FRP prediction of $\vect{r}$ using the NBGD algorithm at iteration $l$ in \eqref{eq:FRP_iteration}.
 \end{theorem}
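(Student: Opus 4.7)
The plan is to substitute the FRP prediction \eqref{eq:FRP_iteration} into the MSE objective \eqref{eq:objective}, compute its gradient with respect to the complex kernel vector $\hat{\vect{s}}$, and then feed that gradient into the NGD rule \eqref{eq:norm_gd}. Since $\text{MSE}$ is a real-valued function of a complex argument, ordinary differentiation does not apply; I would therefore invoke Wirtinger calculus (to which the paper already points with App.~\ref{app:wirtinger}), treating $\hat{\vect{s}}$ and $\hat{\vect{s}}^{*}$ as formally independent and using the convention that the steepest-descent direction for a real-valued cost is the conjugate Wirtinger gradient $\nabla_{\hat{\vect{s}}^{*}}\text{MSE}$.

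Concretely, I would first expand
\begin{equation*}
\text{MSE}(\hat{\vect{r}},\vect{r}) = \frac{1}{2B}(\hat{\vect{r}}-\vect{r})^{\dagger}(\hat{\vect{r}}-\vect{r})
\end{equation*}
and substitute $\hat{\vect{r}}^{(l)} = \vect{a} + \jmath c\,\mathbb{T}\hat{\vect{s}}^{(l)}$ with $c \triangleq \tfrac{8}{9}\gamma E_s$. Applying Wirtinger differentiation, only the $\hat{\vect{r}}^{\dagger}$ factor contributes (since $\partial \hat{\vect{r}}/\partial \hat{\vect{s}}^{*}=0$), and $\partial \hat{\vect{r}}^{\dagger}/\partial \hat{\vect{s}}^{*} = -\jmath c\,\mathbb{T}^{\dagger}$, so
\begin{equation*}
\nabla_{\hat{\vect{s}}^{*}}\text{MSE} \;=\; -\frac{\jmath c}{2B}\,\mathbb{T}^{\dagger}\bigl(\hat{\vect{r}}^{(l)}-\vect{r}\bigr).
\end{equation*}
The factor $\tfrac{1}{2}$ in the definition \eqref{eq:objective} is, as the paper anticipates, what keeps this expression clean.

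Next I would plug this gradient into \eqref{eq:norm_gd}, i.e., $\hat{\vect{s}}^{(l+1)} = \hat{\vect{s}}^{(l)} - \alpha\,\nabla f \oslash |\nabla f|$. The key observation is that component-wise normalization is invariant under multiplication by any strictly positive real scalar: since $c/(2B)>0$, the magnitude $|{-\jmath c}/(2B)|=c/(2B)$ cancels in both numerator and denominator of each entry, leaving only the sign $-\jmath$ and the data-dependent direction $\mathbb{T}^{\dagger}(\hat{\vect{r}}^{(l)}-\vect{r})$. Combining the $-\alpha$ in front of the update with the residual $-\jmath$ yields $+\jmath\alpha$, reproducing \eqref{eq:NBGD_main} verbatim.

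The only genuine subtlety — and what I expect to be the main obstacle for a careful write-up — is justifying cleanly (i) why the Wirtinger gradient $\nabla_{\hat{\vect{s}}^{*}}$ is the correct object to use as the ``gradient'' in the NGD recipe \eqref{eq:norm_gd} when parameters are complex, and (ii) that the element-wise absolute value in $\oslash|\cdot|$ really does absorb the scalar $c/(2B)$ on a per-entry basis rather than on the vector as a whole. Both points are standard but deserve explicit verification, and relegating the detailed Wirtinger manipulation to App.~\ref{app:wirtinger} (as the paper does) seems the natural structure.
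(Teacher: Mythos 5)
Your proposal is correct and follows essentially the same route as the paper's Appendix~\ref{app:wirtinger}: Wirtinger differentiation of the MSE with only the conjugate-linear factor contributing, giving a gradient proportional to $-\jmath\,\mathbb{T}^{\dagger}(\hat{\vect{r}}^{(l)}-\vect{r})$, followed by the observation that the element-wise normalization absorbs the positive real scalar and the sign combination $-\alpha\cdot(-\jmath)$ yields $+\jmath\alpha$. If anything, you make the absorption of the constant $c/(2B)$ in the normalization more explicit than the paper does, which simply states that the unnormalized step is ``transformed by taking an element-wise normalization.''
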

 \begin{proof}
    See Appendix~\ref{app:wirtinger}.
 \end{proof}
  As shown in Fig.~\ref{fig:NBGD_sketch} (ii), the NBGD block yields a vector of optimized kernels $\hat{\vect{s}}$ that together with $\{\vect{a},\mathbb{T}\}$ are sent to the FRP block. Once $\hat{\vect{r}}^{(l)}$ is generated, the final stage, shown in Fig.~\ref{fig:NBGD_sketch} (iii), is the convergence assessment. During this stage, the performance of the model is checked against the SSFM simulation. True and approximated outputs $(\vect{r},\hat{\vect{r}}^{(l)})$ are used to compute the objective function \eqref{eq:objective}. If the MSE is larger than a predefined threshold $\tau$, an increment in the NBGD iteration is sent to the optimizer update block (ii). Otherwise, the optimization continues until the MSE is lower than $\tau$ or a predefined number of iterations is reached. Although the proposed optimization scheme uses only one polarization, the FRP expression for both is equivalent since they share the same linear dependency with kernels. Therefore, the optimized kernels can be used to obtain a similar performance for the second polarization.   

 Linear regression as a strategy to optimize FRP has been previously addressed in works targeting perturbation-based nonlinearity compensation techniques, for example, \cite{Tao2011, Liang2014,Rafique2015}. Nevertheless, this paper is introducing a linear optimization approach to enhance FRP from a modeling perspective. Although multiple optimization techniques can be adopted and the ordinary least squares solution (OLS) is the optimal solution when MSE is the objective function, there are two reasons why gradient descent was chosen. First, OLS exhibits a well-known tendency to overfit the optimization batch, which results in a failure to accurately fit new batches. Secondly, for large values of $M$ the matrix of triplets may yield an underdetermined system of linear equations when low-cardinality constellations are used. To overcome the potential above-mentioned issues of OLS, we chose GD as an optimization strategy.
\subsection{NBGD Implementation Aspects} \label{sec:implementation}
The NBGD algorithm was implemented with the set of parameters specified in Table \ref{table:nbgdparm}.~In this subsection, we summarize key aspects of the numerical implementation. 

\noindent \textit{Data collection stage:} For a fixed power $P$ and memory $M$, a stream of transmission pairs is generated according to Fig.~\ref{fig:system_model}. Using a sliding window, the batch vectors $\vect{a}$ and $\vect{r}$ are populated, and the corresponding input neighbors are processed to fill in the matrix of triplets $\mathbb{T}$. The batch size $B$ is left as a tuning parameter to adjust the computational demand of the optimizer.

\noindent \textit{Optimizer update:} The estimated kernels vector is initialized to zero\footnote{No significant advantage was observed when initializing with a non-zero vector. There was no good reason to prefer one way of initializing over the other, thus we simply chose to initialize all kernels to zero.}, while for choosing $\alpha^{(0)}$ two criteria are considered: order of magnitude and magnitude itself. Firstly, the $\alpha$'s order of magnitude should match the order of the gradient. Choosing the right order is crucial since NBGD can meander around or slowly crawl near stationary points. Secondly, a \emph{scheduled learning rate} is chosen. The scheduled decreases $\alpha^{(l)}$ a fraction of it after $15$ iterations. This scheduled learning rate is a homogeneous staircase down when plotted with respect to $l$. 

Once the gradient step is performed, the batch of inputs and triplets $\{\vect{a},\mathbb{T}\}$ is sent to the FRP block together with current NBGD kernels' vector $\hat{\vect{r}}^{(l)}$ to generate $\hat{\vect{r}}^{(l+1)}$ (see Fig.~\ref{fig:NBGD_sketch} stage (ii)). Then the estimated output vector is delivered to stage (iii) in Fig.~\ref{fig:NBGD_sketch}.

\noindent \textit{Convergence assessment:} MSE is computed using the SSFM outputs and estimated outputs of the batch $(\vect{r}, \hat{\vect{r}})$ and a threshold check is performed. The latter is such that when the difference between the current and previous MSE values is higher than the threshold and $l$ is smaller than a user-defined max number of iterations $\bar{l}$, a new gradient step is taken and the cycle is repeated. If not, the algorithm converges and the reached MSE value and the NBGD kernels are stored as MSE$^c$ and $\vect{s}^c$, respectively, and the algorithm moves to the validation stage.

\noindent \textit{Convergence Validation:} Once NBGD has converged, a validation test is performed over a fresh batch $\{\vect{a},\mathbb{T},\vect{r}\}^\star$. If MSE for the validation batch is $\leq 10\text{MSE}^c$, we consider NBGD to have converged, if not,  $\alpha^{(0)}$ is reduced $25\%$ in magnitude, and NBGD is restarted with $\hat{\vect{s}}^{(0)}=\vect{s}^c$. The validation stage is done to avoid overfitting and it is performed a maximum of 5 times. If within these 5 iterations NBGD doesn't converge, the algorithm ends with no estimation reached. 

\begin{table}[t!]
\def\arraystretch{0.9}
\caption{NBGD algorithm parameters}
\centering
{\footnotesize
\begin{tabular}{c c}
\toprule 
Kernel initialization        &  $\hat{\vect{s}}^{(0)}=0+0\jmath$ \\
Gradient step initialization &  $\alpha^{(0)} \approx 1*\mathcal{O}(\nabla \text{MSE})$ \\
Schedule function            &  $\alpha^{(l)} = 0.9\alpha^{(l-1)}$ iif $\lfloor l/15 \rfloor \in \mathbb{Z}^+$ \\
Threshold                    &  $0.1 \alpha$ \\
Max number of iterations $\bar{l}$ & $10^5$  \\
\bottomrule
\end{tabular}}
\label{table:nbgdparm} 
\end{table}

\section{NBGD Numerical Results}\label{sec:nbgd_results}
 In order to validate the NBGD optimized model of the previous section, a set of simulations in terms of the metrics discussed in Sec. \ref{sec:frp_performance} was performed. This section considers the same representative transmission scheme introduced in Sec. \ref{sec:frp_performance}. Nevertheless, the NBGD algorithm can be used to optimize kernels for multiple transmission scenarios. Additionally, the end of this section includes a simple complexity analysis. Throughout this section, we refer to the FRP model computed with NBGD-optimized kernels simply as NBGD.
 
\subsection{Numerical Validation}
 In analogy to Fig.~\ref{fig:snr_vs_p}, Fig.~\ref{fig:snr_vs_p_nbgd} shows SNR versus input power. This figure shows the NBGD results without ASE for two memory sizes ($M =0,9$), with ASE for $M=9$, as well as the numerical simulations (SSFM) with and without ASE.  Overall, in the results shown in Fig.~\ref{fig:snr_vs_p_nbgd} it is evident that NBGD is in good proximity in magnitude and overall behavior to SSFM for all powers in the region of study with and without ASE. Unlike the results in Fig.~\ref{fig:snr_vs_p}, NBGD exhibits a small gap to the SSFM baseline, even for the extreme case $M=0$. Previously, we showed in Fig.~\ref{fig:snr_vs_p} that FRP starts to diverge from SSFM at $10$~dBm. For that power and $M=15$, the gap between FPR and SSFM is approximately $0.39$~dB. As shown in Fig.~\ref{fig:snr_vs_p_nbgd}, NBGD reaches the same proximity to the SSFM value only at $P=$17~dBm, for $M=9$, thus yielding a $7$~dB extension with respect to FRP of the model's range of validity. In Fig.~\ref{fig:snr_vs_p_nbgd} we highlight that NBGD's accuracy extends up to $10$~dB above the optimum launch power, surpassing the $3$~dB achieved by FRP shown in Fig.~\ref{fig:snr_vs_p}.
 
\begin{figure}[!t]
    \centering 
    \resizebox{0.5\textwidth}{!}{\includegraphics{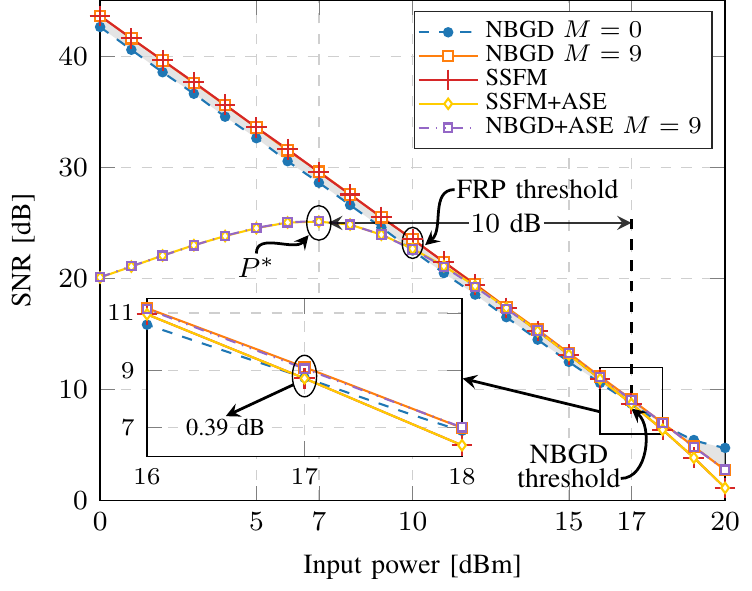}}
    \caption{The SNR in \eqref{eq:snr_def} as a function of launch power $P$ with (``+ASE'') and without ASE noise. Increments in memory size close the gap between SSFM and NBGD in absence of ASE noise. The optimum launch power for the ``+ASE'' case $P^*=7$~dBm is shown.  The gray area delimits the region covered for $0\leq M \leq 9$. The NBGD prediction (with and without ASE) starts to fail at about $10$~dB above $P^*$.} \label{fig:snr_vs_p_nbgd}
\end{figure}

\begin{figure*}[!t]
\centering
\resizebox{0.32\textwidth}{!}{\includegraphics{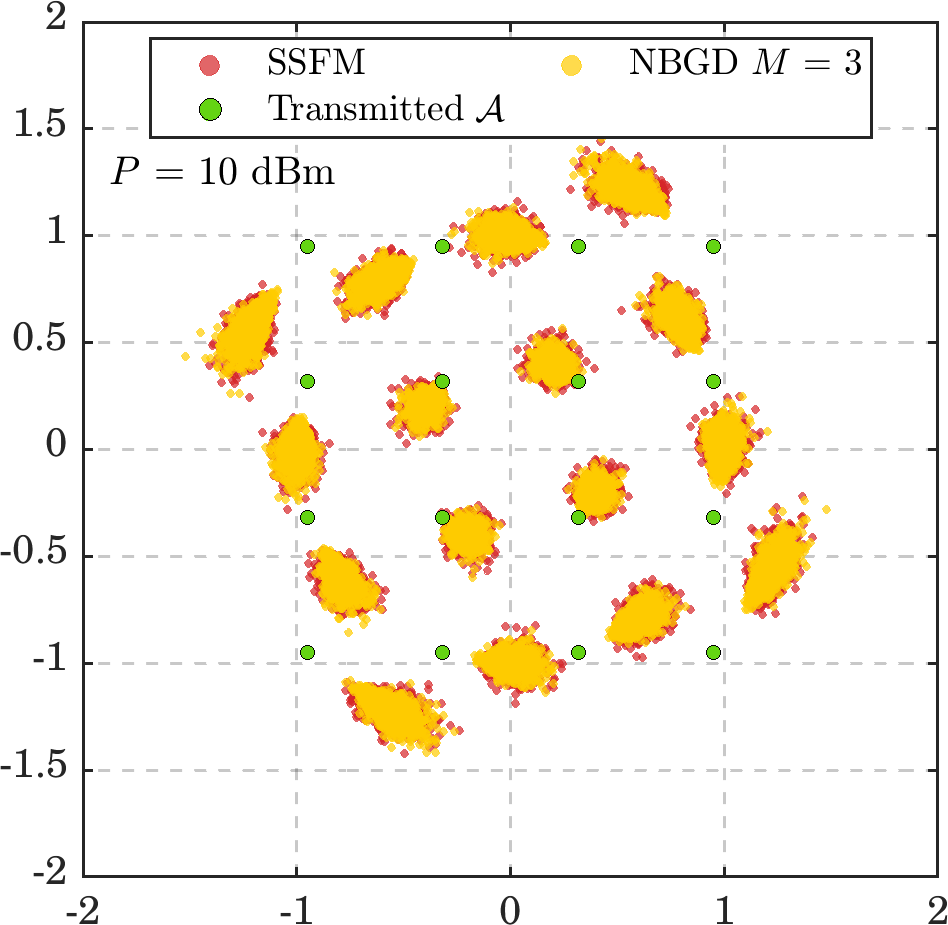}}
\hfill
\resizebox{0.32\textwidth}{!}{\includegraphics{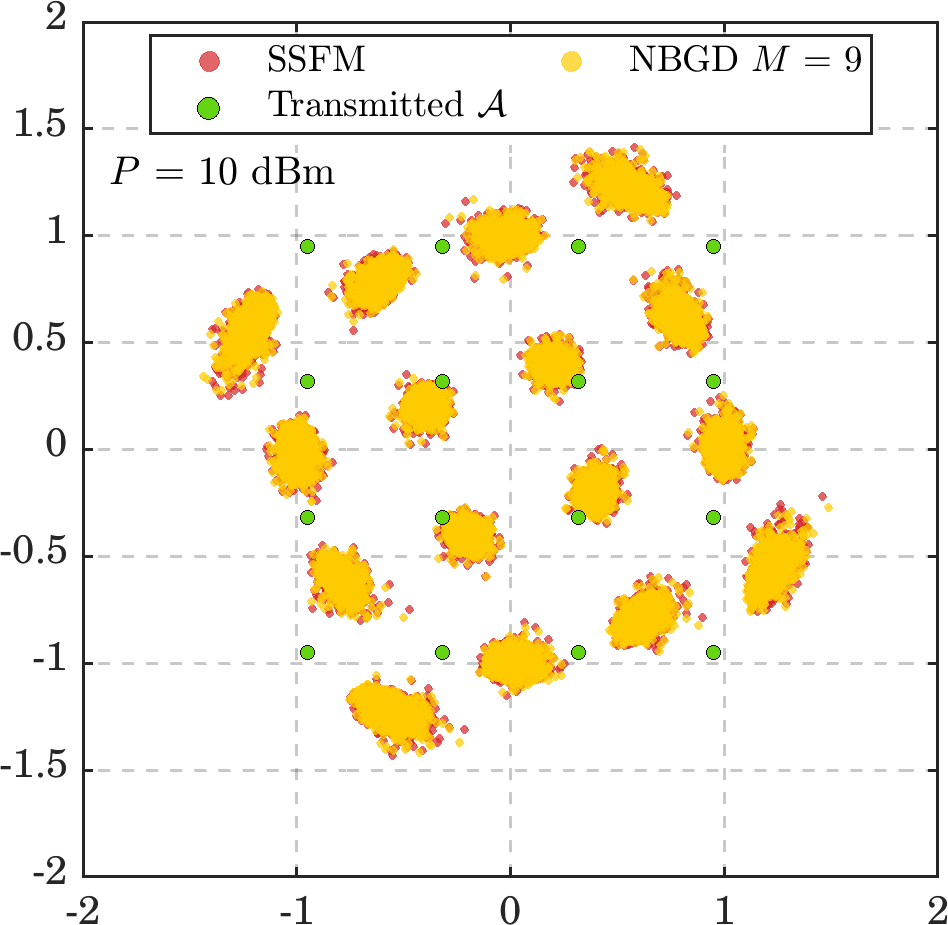}}
\hfill
\resizebox{0.32\textwidth}{!}{\includegraphics{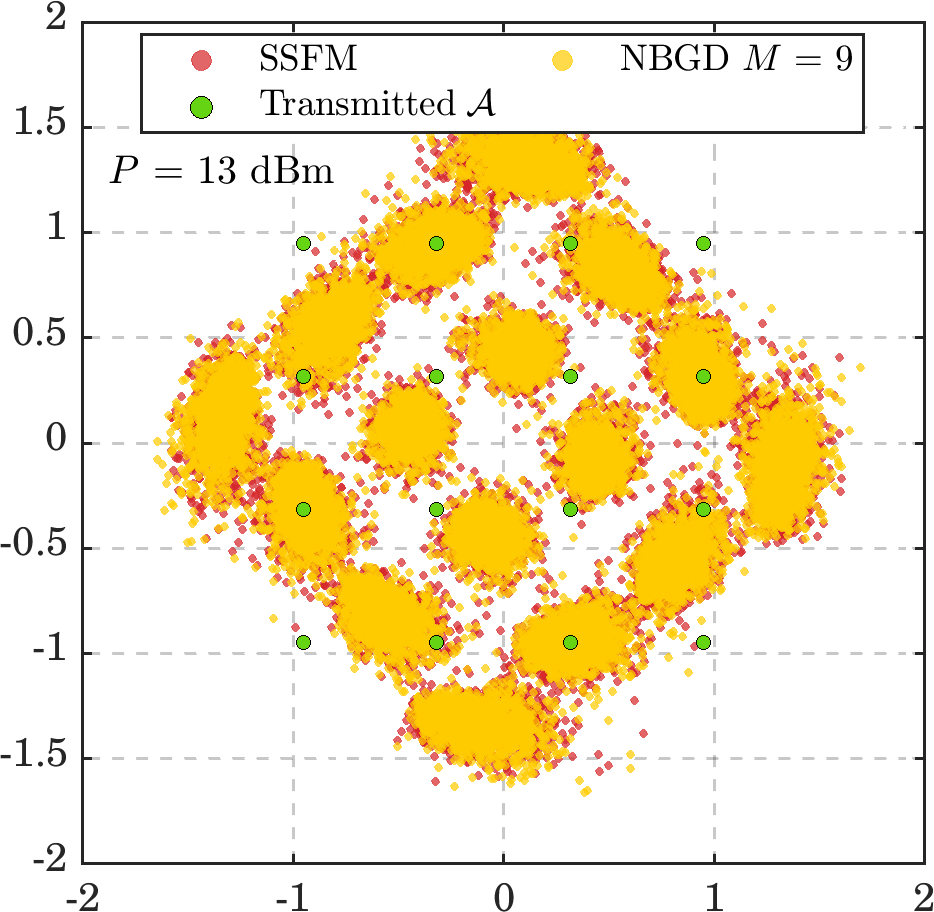}}
\caption{NBGD constellation diagrams for (a) $P=10$~dBm and NBGD $M=3$, (b) $P=10$~dBm and NBGD $M=3$, and (c) $P=13$~dBm and NBGD $M=9$. Overall, the mismatch between the constellations observed in Fig.~\ref{fig:clouds} is improved. For $P=$10~dBm, the match is satisfactory at $M=3$ (a) but enhanced with memory size increment to $M=9$ (b).} \label{fig:nbgd_clouds}
\end{figure*}
 
 Fig.~\ref{fig:nbgd_clouds} shows the constellation diagrams of the two representative powers considered in Fig. \ref{fig:clouds} ($10$ and $13$~dBm). These two powers are within the NBGD validity region. The constellation diagram at $P=10$~dBm and memory size $M=3$ in Fig.~\ref{fig:nbgd_clouds} (a) shows that NBGD matches SSFM better than FRP with $M=5$ in Fig.~\ref{fig:clouds} (a). In addition, the NBGD clouds for $M=9$ shown in Fig.~\ref{fig:nbgd_clouds} (b) and (c) exhibit a good match in phase and amplitude scaling. In Fig.~\ref{fig:mu_trajectory_nbgd}, we show the evolution of the NBGD conditional means $\hat{\mu}$ as a function of $M$. We consider the same powers chosen in Fig.~\ref{fig:mu_trajectory} and show the first quadrant of $16$~QAM. For both powers, it can be observed that unlike the results in Fig.~\ref{fig:mu_trajectory}, the conditional means for NBGD quickly converge to values falling on the same rings as SSFM. Overall, the $\hat{\mu}_m$ values are in better proximity to all $\mu_m$ than FRP is, as observed in Fig. \ref{fig:mu_trajectory}. 

 During the assessment of NBGD, it was observed that for powers below the FRP threshold, the optimized kernels at a power level within this regime yield good accuracy for any power within the FRP's validation region. This behavior is expected since, in the linear and pseudo-linear regimes, the cloud's shape does not significantly change with power increments.  For power above the FRP threshold, point-to-point optimization is required. Nevertheless, this does not represent a critical disadvantage for NBGD with respect to FRP, especially because NBGD reaches good accuracy with a small $M$.
 
\begin{figure}[!t]
    \centering 
    \resizebox{0.48\textwidth}{!}{\includegraphics{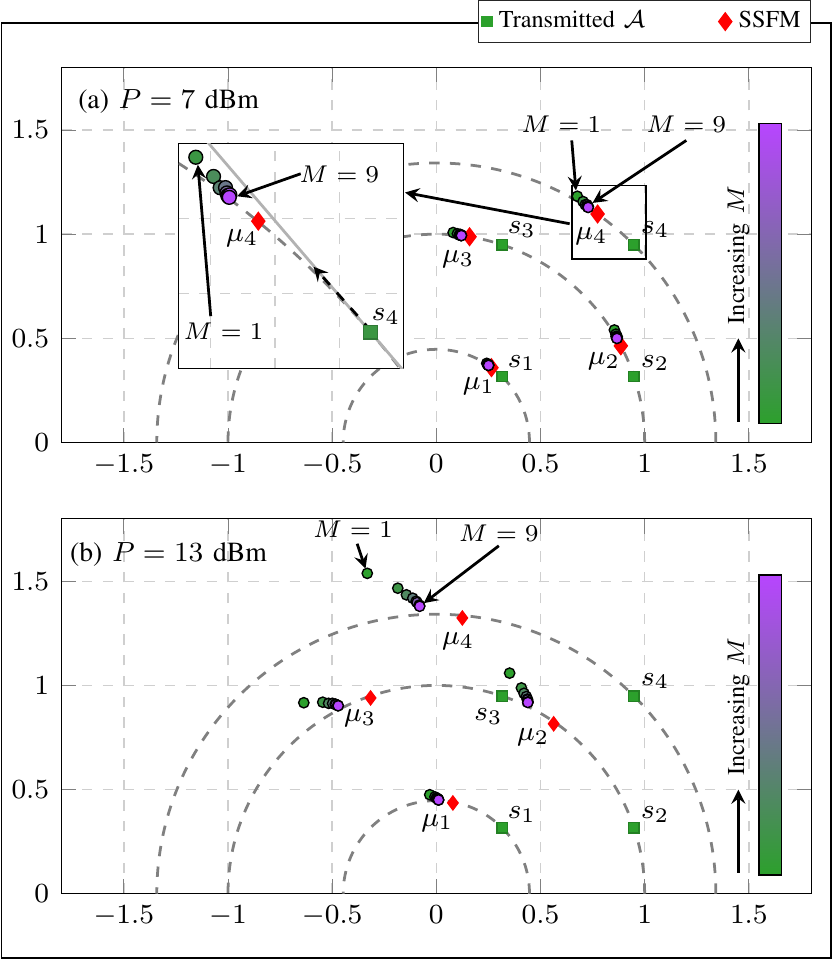}}
    \caption{The evolution with increasing memory $M$ for $\hat{\mu}$ in \eqref{eq:cond_mean_frp} is depicted with a color gradient. We consider the four symbols $\{s_1,s_2,s_3,s_4\}$ within the first quadrant of $16$-QAM, and two powers (a) $P=7$~dBm ($P^*$) and (b) $P=13$~dBm (example Fig.~\ref{fig:clouds} (c)). The rings where $\{s_m\}$ lay are drawn and ${\mu_m}$, red dimonds,  denote the SSFM conditional means for $s_m$. It can be observed that overall $\{\hat{\mu}_m\}$ converges to the $\{s_m\}$ rings, and they do not match $\{\mu_m\}$ for either (a) nor (b).} \label{fig:mu_trajectory_nbgd}
\end{figure}

 The performance of the NBGD model is also examined in terms of the two metrics $\Delta r$ and $\Delta \phi$ defined in \eqref{eq:radii_ratio} and \eqref{eq:phase_diff}, respectively. The results are shown in Fig.~\ref{fig:summary_radii_phase_nbgd} (a) and (b), respectively, for three representative memory sizes: $M=0,3,9$. On one hand, we observe in Fig.~\ref{fig:summary_radii_phase_nbgd} (a) that the optimized model resulting from the NBGD (i) shows a good fit to SSFM in the linear and pseudo-linear regimes, and (ii) exhibits an opposite behavior to the trend shown in Fig.~\ref{fig:summary_radii_phase} (a): instead of diverging from SSFM with memory size increments, the optimized model moves towards the SSFM simulations. This behavior is the result of a better prediction of the conditional means, and thus, of the received power. In Fig.~\ref{fig:summary_radii_phase_nbgd} FRP $M=0$ is shown as the FRP reference since is the closest scenario to SSFM. On the other hand, we observe in Fig.~\ref{fig:summary_radii_phase_nbgd} (b) that the optimized model outperforms FRP with $M=15$ even for the extreme case $M=0$. Therefore, NBGD enhances the phase match. For any $M\geq3$ the model matches the SSFM baseline, and thus, we can generate an accurate prediction of the average nonlinear phase rotation with a very low model memory size. In addition, Fig.~\ref{fig:summary_radii_phase_nbgd} shows the NBGD threshold found in Fig.~\ref{fig:snr_vs_p_nbgd}. In the threshold, both, $\Delta r$ and $\Delta \phi$ show good proximity for $M\geq3$. In the particular case of these two metrics, we observe a $7$~dB gain with respect to the FRP reference.
\begin{figure*}[!t]
    \centering 
    \centering
    \resizebox{0.49\textwidth}{!}{\includegraphics{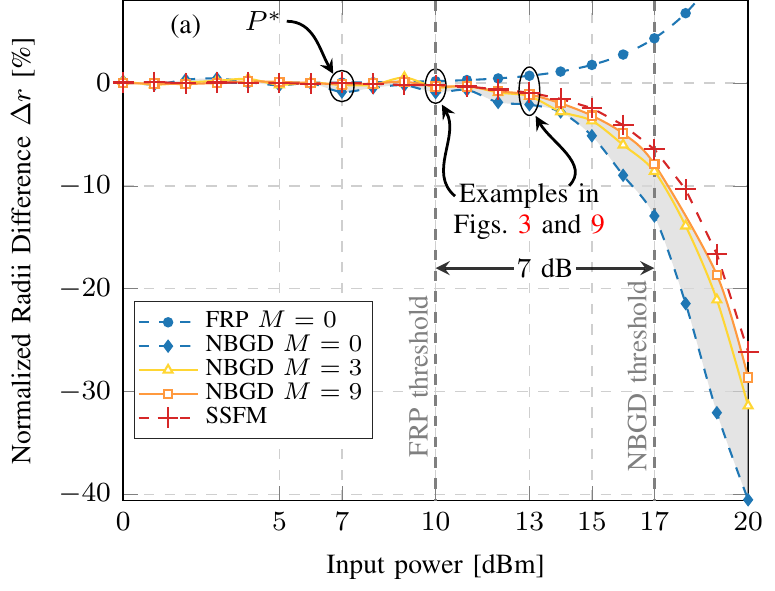}}
    \hfill
    \resizebox{0.49\textwidth}{!}{\includegraphics{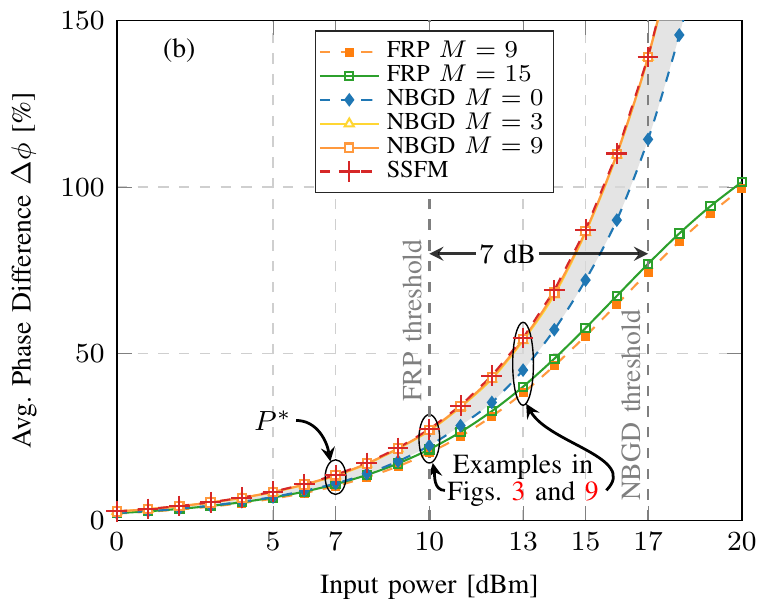}}
    \caption{(a) Normalized radii difference, (b) phase difference, and. The gray areas delimit the region covered for $0\leq M \leq 9$, and $M = 3$ is shown explicitly. The plots include the simulations with SSFM, as well as the FRP scenario that is the closest to SSFM.} \label{fig:summary_radii_phase_nbgd}
\end{figure*}
 
 Fig~\ref{fig:relative_error_nbgd} (a) shows the relative error defined in \eqref{eq:relateverror} as a function of power. The results for NBGD show once more an overall improvement with respect to the best FRP baseline ($M=15$ in this case). For memories $M>0$, the NBGD kernels allow the FRP model to significantly decrease the relative error in the region of powers beyond the conventional threshold ($11$\%). All the NBGD curves are below the FRP and they exhibit a decreasing trend with increments in memory size. Fig.~\ref{fig:relative_error_nbgd} (a) shows that NBGD with $M=9$ reaches an 11\% relative error at $16$~dBm. This represents a 6~dB gain compared to FRP using the same memory size. This result, also supported by the SNR analysis done in Fig.~\ref{fig:snr_vs_p_nbgd}, confirms that NBGD significantly extends the region of validity of the FRP model.
     
\begin{figure*}[!t]
    \centering 
    \resizebox{0.49\textwidth}{!}{\includegraphics{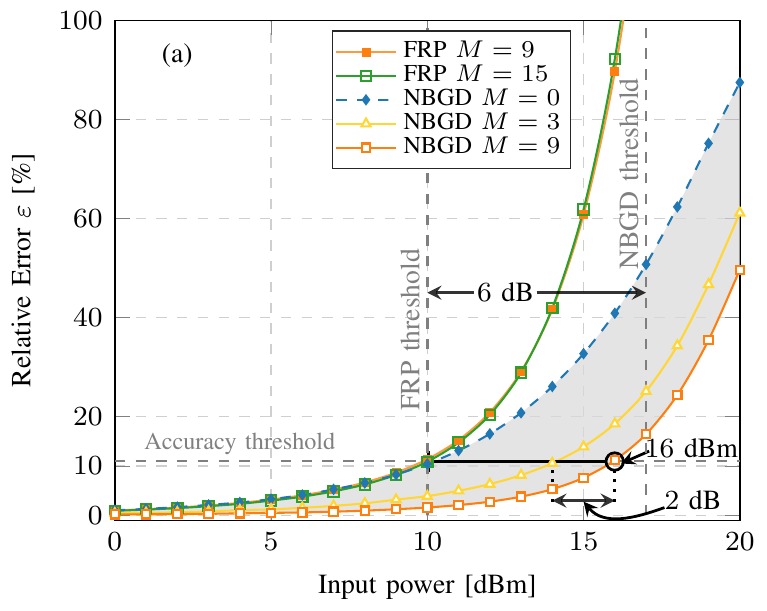}}
    \hfill
    \resizebox{0.47\textwidth}{!}{\includegraphics{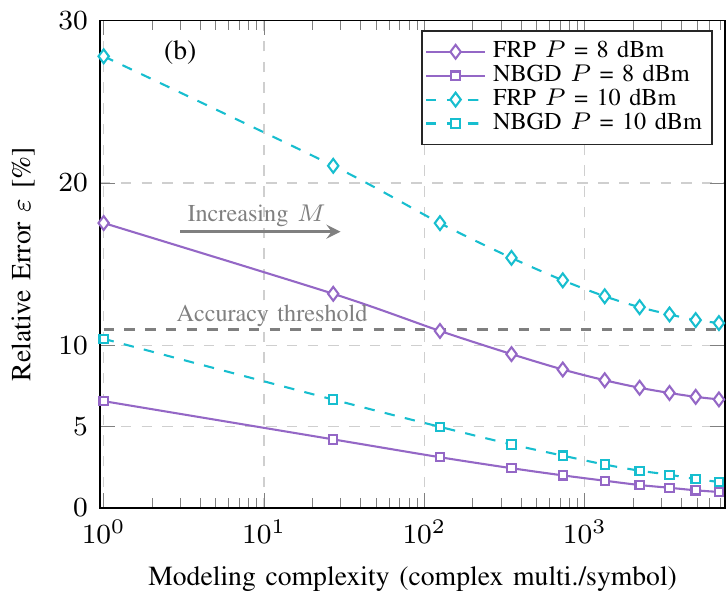}}
    \caption{Relative error of FRP/NBGD with respect to SSFM as a function of (a) power and (b) complex multiplications per symbols. The gray area in (a) is delimiting the region covered for $0\leq M \leq 9$, and $M = 3$ is shown explicitly. In (b) $P=8$~dBm and $10$~dBm are considered. Notice in (b) that increments in memory size monotonically decrease the error and that FRP reaches the accuracy threshold at higher complexity than NBGD does}. \label{fig:relative_error_nbgd}
\end{figure*}

Finally, given that a multi-span setup does not change the FRP's mathematical formalism in (3), we can argue that the NBGD approach paradigm remains applicable to the multi-span scenario and that the optimization process will be able to capture the nonlinear effects as effectively as it does for the single-span scenario. A study of the performance of NBGD for multi-span systems is out of the scope of this work and is left for further investigation.

 \subsection{Modelling Complexity Reduction}
 Overall, the results shown in Figs~\ref{fig:snr_vs_p_nbgd}, \ref{fig:mu_trajectory_nbgd}, and \ref{fig:summary_radii_phase_nbgd} prove a significant reduction in the memory size needed to guarantee a satisfying degree of accuracy using NBGD. In Fig.~\ref{fig:relative_error_nbgd} (a), it can also be observed that with $M=3$ the accuracy threshold is now achieved for $P=14$~dBm and not for $8$~dBm as shown in Sec.~\ref{subsec:RE}. Using the same reasoning used in Sec.~\ref{subsec:RE}, we observe that also for NBGD is possible to prolong the accuracy reached with $M=3$. In the NBGD case, increasing $M$ from $3$ (343 kernels) to $9$ (6 859 kernels) prolongs $2$~dB the model's accuracy. The NBGD optimization is therefore providing larger increments of the model's accuracy, and it does it at a cheap computational cost. Using $M=9$ instead of $M=15$ represents a reduction of 78\% in the number of kernels needed to hold $\varepsilon$ in the accuracy threshold.  

 To further analyze the memory reduction, we provide in Fig.~\ref{fig:relative_error_nbgd} (b) an evaluation of the relative error as a function of the modeling complexity, i.e., the number of kernels needed to compute the FRP model. We consider the powers $P=8$~dBm and $P=10$~dBm and evaluate $\varepsilon$ for FRP and the NBGD. The chosen powers are within the pseudo-linear regime. Fig.~\ref{fig:relative_error_nbgd} (b) shows, as expected, that FRP and NBGD have a decreasing trend with increments in complexity. The reached $\varepsilon$ values are power-dependent and they worsen when the system approaches the nonlinear regime. It is evident from Fig.~\ref{fig:relative_error_nbgd} (b), that NBGD outperforms FRP. NBGD achieves relative errors below the accuracy threshold at a very low complexity cost. NBGD is therefore providing a satisfactory level of accuracy using small memory sizes as hinted by Fig.~\ref{fig:snr_vs_p_nbgd} and Fig.~\ref{fig:summary_radii_phase_nbgd}. We can conclude that NBGD requires fewer kernels to compute the model (see \eqref{eq:numk}), which translates into a significant reduction in the computational complexity needed to generate accurate predictions. 

 Another relevant aspect to comment on is the computational complexity required for the FRP and NBGD kernels reported in this work. On one hand, to compute the kernels in their integral form we performed a simple Riemann integration. The number of multiplications and sums demanded by this approximation was $0.68\cdot10^6\cdot(2M+1)^3$, for which we considered the minimum number of steps leading to sufficient accuracy. On the other hand, the NBGD optimizer demanded $3\cdot N^*_{\text{it}}\cdot B\cdot(2M+1)^3$ multiplications and sums, where $B$ is the NBGD batch size, and $N^*_{\text{it}}$ is the average number of iterations for NBGD convergence ($\approx 2\cdot10^2$ ). Tuning the batch size $B$ has a direct impact on the complexity. In general, the chosen batch size does not depend on $M$, thus considering $B$ large enough but smaller than $\mathcal{O}(M^3)$, allows the NBGD's complexity to remain significantly below the FRP's even when using the same value of $M$. However, we reiterate that the advantage of NBGD is that it reduces the model's memory $M$ required to achieve a fixed level of accuracy with respect to FRP, hence further increasing the complexity gap with FRP.

 In Fig.~\ref{fig:generalization_nbgd}, we present an evaluation of the relative error as a function of the modeling complexity, for two additional modulation formats, QPSK and 64QAM. The relative error for these cases is computed using the NBGD kernels optimized for 16QAM. The results shown in Fig.~\ref{fig:relative_error_nbgd} (b) for FRP and NBGD, are displayed in Fig.~\ref{fig:generalization_nbgd} (a) and (b) as a baseline for the analysis of the modulation format generalization of NBGD kernels. As observed for the 16QAM case, QPSK and 64QAM outperform FRP for both considered powers. Their overall behavior across the complexity range resembles 16QAM as well, and there are no significant penalties for the relative error at low complexity. These results allow us to conclude that the NBGD kernels are generalizing well to other QAM modulation formats. 

\begin{figure*}[!t]
    \centering 
    \resizebox{0.49\textwidth}{!}{\includegraphics{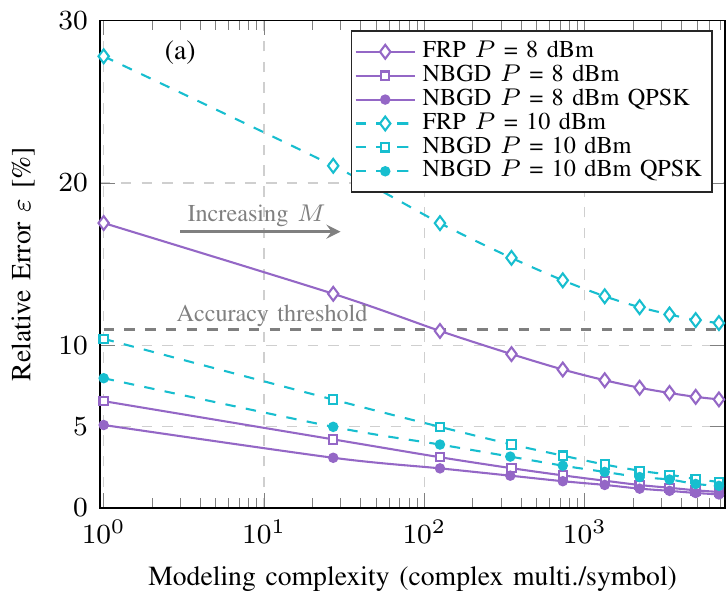}}
    \hfill
    \resizebox{0.49\textwidth}{!}{\includegraphics{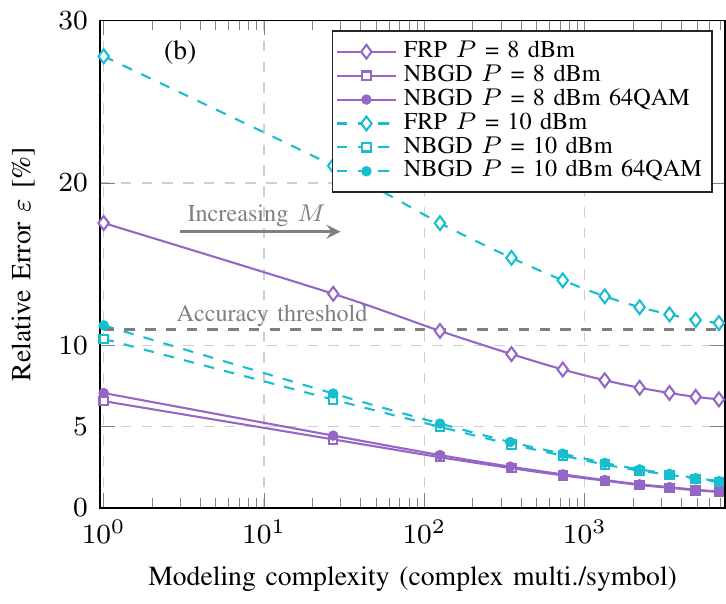}}
    \caption{Using the NBGD kernels optimized with a 16QAM modulation format, the relative error of NBGD is plotted with respect to SSFM as a function of complex multiplications per symbols for (a) QPSK and (b) 64QAM. The results of Fig.~\ref{fig:relative_error_nbgd} (b) are included in (a) and (b) to provide a benchmark for comparison. Overall, the results for QPSK and 64QAM confirm that the 16QAM NBGD kernels generalize well to other QAM formats.} \label{fig:generalization_nbgd}
\end{figure*}

 As a final remark, notice that formally, both, FRP and NBGD are infinite-memory models since the nonlinear kernels only vanish for $M\to\infty$. However, an effective finite value of $M$ for which the perturbation model can be considered accurate enough can be defined. One way to do this is via the introduction of ad-hoc accuracy thresholds, as we did in Sec. III. The key outcome of our paper is that an effective memory reduction is introduced by NBGD, which holds regardless of the specific threshold set.
\section{Conclusions}
We conducted a comprehensive investigation on the accuracy and limitations of the time-domain FRP model for a single-span, single-channel transmission in dual-polarization systems. We proposed a data-driven enhancement of FRP (called NBGD) by optimizing the perturbation kernels. Various numerical simulations were conducted to demonstrate that NBGD provides the same accuracy as FRP with a reduced memory for the model, therefore, reducing the computational complexity needed to generate a satisfactory prediction. In addition, we have shown that NBGD extends the power range of validity of the FRP model above the pseudo-linear threshold. For the study case considered, the extension is $6-7$~dB above the chosen accuracy threshold for the FRP, depending on the metric used. We also show that the NBGD's good performance generalizes well for QAM modulation formats.

Future works include the extension of the NBGD method to a multi-span wavelength-division multiplexed scenario, and developing nonlinearity compensation/mitigation algorithms based on the enhanced low-complexity model resulting from NBGD. Experimental validation is also left for future work.

\section*{Acknowledgments}
The authors would like to thank Prof. Erik Agrell (Chalmers University of Technology) and  Dr. Olga Vassilieva (Fujitsu Network Communications, Inc.) for comments on earlier versions of this manuscript. The authors would also like to acknowledge Dr.-Ing. Tobias Fehenberger (ADVA Optical Networking SE) for providing an early version of the numerical computation of kernels in their integral form. 
\appendices
\section{Proof of Theorem \ref{T:conditional_mean}}\label{app:condmean}
Let $A_{\xpol/\ypol}$ and $R_{\xpol/\ypol}$ be the complex random variables (RV) associated with one of the $\xpol/\ypol$ polarization components of the transmitted and received symbols, respectively. 
Under the assumption that \eqref{eq:single_ch_dual_pol_DRP_finite} is a stationary channel, we take the conditional expectation of one of its components ($\xpol$) for $n=0$
\begin{align}\label{eq:ex_rp_rvs}
    \tilde\mu_{\xpol,m}  = s_m + \mathbb{E}\{\Delta A_{\xpol,0}|A_{\xpol,0}=s_m\}.
\end{align}
 To determine $\mathbb{E}\{\Delta A_{\xpol,0}|A_{\xpol,0}=s_m\}$ we consider four possible cases of the indices $(k,l,m)$ in \eqref{eq:single_ch_dual_pol_DRP_finite2}:
\begin{enumerate}
    \item \textbf{Case I:} \textit{all indices are  different and nonzero}, i.e., $k\neq l\neq m \neq 0$. 
    \item  \textbf{Case II:} \textit{a single index is zero}, i.e, $k = 0$ and $l\neq0, m \neq 0$; or $l = 0$ and $k\neq 0, m \neq 0$; or $m = 0$ and $k\neq 0, l\neq 0$. Here two sub-cases arise: (i) when the remaining two indices are identical; and (ii) the remaining two indices are different. 
    \item  \textbf{Case III:} \textit{two indices are equal to zero and the third one is nonzero}, i.e., $k=l=0,m\neq 0$, or $k=m=0,l\neq 0$, or $l=m=0, k\neq 0$. 
    \item  \textbf{Case IV:} \textit{all indices are zero}, i.e. $k = l = m = 0$. 
\end{enumerate}

In the following, we compute each of the contributions given by the above cases.

\subsection{Case I}
In this case, the contribution is given by
\begin{align}\label{App_eq:caseI}
    \mathbb{E}&\{\Delta A_{\xpol,0}|A_{\xpol,0}=s_m\} \nonumber\\ & = \mathbb{E}\Bigg\{ c\sum_{(k,l,m)\in\mathcal{S}\setminus \{0\}} (A^*_{\xpol,k}A_{\xpol,l}+A^*_{\ypol,k}A_{\ypol,l})A_{\xpol,m}S_{klm}\Bigg\},
\end{align}
where $c\triangleq \jmath \frac{8}{9}\gamma E_s$ and $\mathcal{S}$ is defined in \eqref{eq:DRP_2}. By virtue of the independence of the random variables $A_{\xpol,n}$ and $A_{\ypol,n}$ $\forall n$, the expectation distributes over all sums and products in \eqref{App_eq:caseI}. In addition, due to the zero-mean condition assumed for $\mathcal{A}$ in Sec.~\ref{sec:snr_def} 
this scenario leads to zero contribution. 

\subsection{Case II}
The contributions are in this case
\begin{align}\label{App_eq:caseI_ia}
    \mathbb{E}&\{\Delta A_{\xpol,0}|A_{\xpol,0}=s_m\} = \nonumber\\ &  \mathbb{E}\Bigg\{ c\sum_{l\in\mathcal{W}}\sum_{m\in\mathcal{W}} (s_m^*A_{\xpol,l}+A^*_{\ypol,0}A_{\ypol,l})A_{\xpol,m}S_{0lm}\Bigg\} \nonumber \\
     & + \hspace{.1cm} \mathbb{E}\Bigg\{ c\sum_{k\in\mathcal{W}}\sum_{m\in\mathcal{W}} (A^*_{\xpol,k}s_m+A^*_{\ypol,k}A_{\ypol,0})A_{\xpol,m}S_{k0m}\Bigg\} \nonumber \\
     & + \hspace{.2cm} \mathbb{E}\Bigg\{ c\sum_{k\in\mathcal{W}}\sum_{l\in\mathcal{W}} (A^*_{\xpol,k}A_{\xpol,l}+A^*_{\ypol,k}A_{\ypol,l})s_m S_{kl0}\Bigg\},
\end{align}
Where $\mathcal{W} = \{i\in[-M,M]\setminus \{0\}\}$. 

For the first term in the r.h.s. of \eqref{App_eq:caseI_ia} subcase(i), i.e., when $l\neq m\neq 0$, the contribution is zero due to the RVs independence, similarly to Case I. For sub-case (ii), i.e., when $l=m\neq0$, the first term in \eqref{App_eq:caseI_ia} is reduced to
\begin{align}\label{App_eq:caseI_ia_b}
    c s_m^* \sum_{l\in\mathcal{W}} \mathbb{E}\{A^2_{\xpol,l}\}S_{0ll} = c s_m^* \mathbb{E}\{A^2_{\xpol}\} \sum_{l\in\mathcal{W}}S_{0ll},
\end{align}
where we have not included all contributions that are zero due to the RVs' zero-mean assumption.

For the second term in \eqref{App_eq:caseI_ia} sub-case (i), i.e., for $k\neq m\neq 0$, the contribution is zero by virtue of the RVs independence. For the sub-case (ii), when $k=m\neq0$, the second term in \eqref{App_eq:caseI_ia} is
\begin{align}\label{App_eq:caseI_ib_b}
    c s_m\mathbb{E}\Bigg\{ \sum_{k\in\mathcal{W}} |A_{\xpol,k}|^2S_{k0k}\Bigg\} =  c s_m \sum_{k\in\mathcal{W}}S_{k0k},
\end{align}
where all contributions that are zero by virtue of the RVs independence and their zero-mean property have been omitted. 

For the third term in \eqref{App_eq:caseI_ia}, the sub-case (i), i.e., when $k\neq l\neq 0$, the contribution is zero under the RVs independence and the zero-mean assumption, but when $k = l\neq 0$, i.e., sub-case (ii), the term is reduced to
\begin{align}\label{App_eq:caseI_ic_b}
    cs_m\mathbb{E}\Bigg\{ \sum_{k\in\mathcal{W}} (|A_{\xpol,k}|^2+|A_{\ypol,k}|^2)S_{kl0}\Bigg\} = c s_m \sum_{k\in\mathcal{W}}S_{kk0}.
\end{align}

Putting together all nonzero contributions of Case II, \eqref{App_eq:caseI_ia_b}, \eqref{App_eq:caseI_ib_b}, and \eqref{App_eq:caseI_ic_b}, leads to
\begin{align}\label{App_eq:type_a}
    \mathbb{E}\{\Delta A_{\xpol,0}|A_{\xpol,0}=s_m\}  =  c \sum_{k\in\mathcal{W}} & s_m (2S_{kk0} + S_{k0k})
    \nonumber\\  +  & s_m^* \mathbb{E}\{A^2_{\xpol}\} S_{0kk}.
\end{align}

\subsection{Case III}
In this case, the contributions are
\begin{align}\label{App_eq:caseIII}
    \mathbb{E}\{\Delta A_{\xpol,0}|& A_{\xpol,0}=s_m\} = \nonumber\\  
    &\mathbb{E}\Bigg\{ c\sum_{k \in\mathcal{W}} (A^*_{\xpol,k}s_m+A^*_{\ypol,k}A_{\ypol,0})s_m S_{k00}\Bigg\} \nonumber \\
    & + \mathbb{E}\Bigg\{ c\sum_{l \in\mathcal{W}} (s^*_m A^*_{\xpol,l}+A_{\ypol,0}A_{\ypol,l})s_m S_{0l0}\Bigg\} \nonumber \\
    & + \mathbb{E}\Bigg\{ c\sum_{m \in\mathcal{W}} (|s_m|^2+|A_{\ypol,0}|^2)A_{\xpol,m}S_{00m}\Bigg\},
\end{align}
and all contributions are zero due to the RV's independence and their zero-mean condition.

\subsection{Case IV}
For this case $A_{\xpol/\ypol,k} = A_{\xpol/\ypol,l} = A_{\xpol/\ypol,m} = A_{\xpol/\ypol,0}$, which makes the conditional expectation 
\begin{align}\label{AppB:type_c}
    \mathbb{E}\{\Delta A_{\xpol,0}|A_{\xpol,0}=s_m\} & = c \mathbb{E}\Big\{\left(|s_m|^2 +A^*_{\ypol,0}A_{\ypol,0} \right) s_m \hspace{0.05cm}  S_{000} \Big\} \nonumber \\  
    & = c s_m\left(|s_m|^2 + \mathbb{E}\{|A_{\ypol,0}|^2\}\right)S_{000} \nonumber \\
    & =  c s_m (|s_m|^2+1)S_{000}.
\end{align}

Finally, we put together in \eqref{eq:ex_rp_rvs} all nonzero contributions, i.e., \eqref{App_eq:type_a} and \eqref{AppB:type_c}, and replacing back $c=\jmath \frac{8}{9}\gamma E_s$, results into 
\begin{align}\label{eq:proofT1}
    \tilde\mu_{\xpol,m} &=  s_m  +  \jmath \frac{8}{9}\gamma E_s \Big[s_m(1+ |s_m|^2) S_{000} \nonumber\\ &+\sum_{k\in\mathcal{W}} s_m ( 2S_{kk0}+ S_{k0k}) + s_m^* \mathbb{E}\{A^2_{\xpol}\} S_{0kk}\Big].
\end{align}
The expression in \eqref{eq:proofT1}  can be immediately generalized to both polarization components $A_{\xpol/\ypol}$. In addition, due to the stationary channel assumption \eqref{eq:proofT1} is valid for any time instance, which completes the proof of the conditional mean in Theorem \ref{T:conditional_mean}.

\section{Proof of Theorem \ref{T:NBGD}}\label{app:wirtinger}
 Using the \emph{Wirtinger formalism} \cite[App.~A]{Fisher}, the partial derivatives for a given function $f(z)$ of a complex variable $z=x+\jmath y \in \mathbb{C}$, $x,y \in \mathbb{R}$, with respect to $z$, are defined as  
\begin{equation}\label{eq:wirtinger_derivative}
    \frac{\partial }{\partial z}f \triangleq \frac{1}{2}\left( \frac{\partial }{\partial x}-\jmath \frac{\partial }{\partial y}\right)f,
    \frac{\partial }{\partial z^*}f \triangleq \frac{1}{2}\left( \frac{\partial }{\partial x}+\jmath \frac{\partial }{\partial y}\right)f.
 \end{equation}
For the case of a multi-variable function, $F: \vect{z} \in \mathbb{C}^L \mapsto w \in \mathbb{R} $, all complex derivatives with respect to the complex variables $z_n=x_n+\jmath y_n \in \mathbb{C}$, with  $x_n,y_n \in \mathbb{R}$  $\forall n = 1,\dots, L$ must be calculated.  These derivatives are combined into the \emph{gradient with respect to the Wirtinger derivatives}  as
\begin{align}\label{App_eq:wderivatives}
    \frac{\partial F}{\partial \vect{z}} & \triangleq \Bigg(\frac{\partial F}{\partial z_1},\frac{\partial F}{\partial z_2},\cdots,\frac{\partial F}{\partial z_ L}\Bigg)^\mathrm{T}, \\
    \frac{\partial F}{\partial \vect{z}^*} & \triangleq \Bigg(\frac{\partial F}{\partial z_1^*},\frac{\partial F}{\partial z_2^*},\cdots,\frac{\partial F}{\partial z_ L^*}\Bigg)^\mathrm{T}.
\end{align}

The gradient with respect to the Wirtinger derivatives in \eqref{App_eq:wderivatives} is related to the gradient $\nabla$ required by \eqref{eq:norm_gd} via\footnote{Note the Writinger derivatives with respect to $\vect{z}$ are not needed for the gradient $\nabla$, but are included in \eqref{App_eq:wderivatives} for completeness.}
 \begin{equation} \label{gradient_gradW}
     \nabla F(\vect{z}) = 2 \frac{\partial F}{\partial \vect{z}^*},
 \end{equation}
where
\begin{equation} \label{gradient}
 \nabla \triangleq \begin{pmatrix}
       \frac{\partial }{\partial x_1}+\jmath\frac{\partial }{\partial y_1} \\[0.3em]
       \frac{\partial }{\partial x_2}+\jmath\frac{\partial }{\partial y_2} \\[0.3em]
       \vdots \\[0.3em]
       \frac{\partial }{\partial x_L}+\jmath\frac{\partial }{\partial y_L}           
     \end{pmatrix}.
\end{equation}

 By virtue of the Wirtinger derivatives  $\frac{\partial \hat r_i}{\partial s^*_i}=0$, since each component $r_i$ in \eqref{eq:channel_model_vector} does not depend on $s^*_i$. Equivalently, $\frac{\partial \hat r_i^*}{\partial s_i}=0$. Therefore when taking the derivatives we have 
 \begin{align}\label{eq:gradientK_cost}
    \nabla \text{MSE}(\hat{\vect{s}})& = 2\frac{\partial \text{MSE} }{\partial \hat{\vect{s}}^*} \nonumber\\
    & = \frac{1}{B}\left[(\hat{\vect{r}}-\vect{r})^*\frac{\partial(\hat{\vect{r}}-\vect{r})}{\partial \hat{\vect{s}}^*} + \frac{\partial (\hat{\vect{r}}-\vect{r})^*}{\partial \hat{\vect{s}}^*}(\hat{\vect{r}}-\vect{r})\right]\nonumber \\
    & = -\eta \mathbb{T}^{\dagger}(\hat{\vect{r}}-\vect{r}),
 \end{align}
 with $\eta=\jmath \frac{8}{9}\gamma E_s B$. The batch gradient descent iteration for $\hat{\vect{s}}$ is then given by
 \begin{align}\label{eq:SDG_k}
   \hat{\vect{s}}^{(l+1)} = \hat{\vect{s}}^{(l)}+\alpha \hspace{0.08cm} \eta \mathbb{T}^{\dagger}(\hat{\vect{r}}^{(l)}-\vect{r}),
 \end{align}
 where
 \begin{equation}\label{eq:channel_model_vector_l}
    \hat{\vect{r}}^{(l)} = \vect{a} +\jmath\frac{8}{9} \gamma E_s \mathbb{T} \hat{\vect{s}}^{(l)}.  
 \end{equation}
 Eq. \eqref{eq:channel_model_vector_l} is obtained using \eqref{eq:channel_model_vector} for the estimated kernels in step $l$. 
 Finally, \eqref{eq:SDG_k} is transformed by taking an element-wise normalization. The normalization of \eqref{eq:channel_model_vector_l} completes the NBGD proof in Theorem \eqref{T:NBGD}.

\begin{IEEEbiographynophoto}
{Astrid Barreiro} (Student Member, IEEE) received a B.Sc. in physics (thesis with highest honors) from Universidad del Valle, Cali, Colombia, in 2022. She obtained an M.Sc. degree in Physics from the same university in 2018. Since January 2019 she has been working towards a Ph.D. in electrical engineering at the Eindhoven University of Technology (TU/e), The Netherlands. Her research interest includes the mathematical modeling for optical fiber transmissions in the nonlinear regime, and the design of pragmatic digital signal processing schemes to overcome nonlinear distortions. 
\end{IEEEbiographynophoto}

\begin{IEEEbiographynophoto}
{Gabriele Liga} (Member, IEEE) was born in Palermo, Sicily, Italy, in 1983. He received the B.Sc. degree (Laurea triennale) in telecommunications engineering from Universita' degli Studi di Palermo in 2005, and the M.Sc. degree in telecommunications engineering (Laurea specialistica) from Politecnico di Milano in 2011. In 2017, he obtained the Ph.D. degree in optical communications from the Optical Networks Group, Electronics and Electrical Engineering Department, University College London, United Kingdom. From 2017 to 2018, he worked as a Postdoctoral Research Associate with the Optical Networks Group, focussing on digital signal processing and nonlinearity compensation techniques for optical fiber transmission. In 2018, he was awarded a Marie Sklodowska-Curie EurotechPostdoc programme fellowship to work on signal shaping tailored to the nonlinear optical fiber channel within the Signal Processing Systems (SPS) Group, Department of Electrical Engineering, Eindhoven University of Technology (TU/e), Eindhoven, The Netherlands.,He currently serves as a Reviewer for several scientific journals in the area of communications and photonics, such as IEEE Journal of Lightwave Technology, IEEE Transactions on Information Theory, IEEE Transactions on Communications, OSA Optics Express, and IEEE Photonics Technology Letters. His research interests embrace the areas of digital communications, mathematical modeling and information theory applied to fiber-optic.
\end{IEEEbiographynophoto}

\begin{IEEEbiographynophoto}
{Alex Alvarado} (S'06--M'11--SM'15) was born in Quellón, on the island of Chiloé, Chile. He received his Electronics Engineer degree (Ingeniero Civil Electrónico) and his M.Sc. degree (Magíster en Ciencias de la Ingeniería Electrónica) from Universidad Técnica Federico Santa María, Valparaíso, Chile, in 2003 and 2005, respectively. He obtained the degree of Licentiate of Engineering (Teknologie Licentiatexamen) in 2008 and his PhD degree in 2011, both of them from Chalmers University of Technology, Gothenburg, Sweden. 
 
Dr. Alvarado is associate professor at the Signal Processing Systems (SPS) Group, Department of Electrical Engineering, Eindhoven University of Technology (TU/e), The Netherlands. During 2018-2022 he was a member of the TU/e Young Academy of Engineering. During 2014—2016, he was a Senior Research Associate at the Optical Networks Group, University College London, United Kingdom. In 2012—2014 Dr. Alvarado was a Marie Curie Intra-European Fellow at the University of Cambridge, United Kingdom, and during 2011—2012 he was a Newton International Fellow at the same institution. Dr. Alvarado's research has been funded in part by the Netherlands Organisation for Scientific Research (NWO) via a VIDI grant, as well as by the European Research Council (ERC) via an ERC Starting Grant. 
 
Dr. Alvarado's research has received multiple awards, including Best Paper Awards at the 2018 Asia Communications and Photonics Conference and at the 2019 OptoElectronics and Communications Conference, and Best Poster Awards at the 2009 IEEE Information Theory Workshop and at the 2013 IEEE Communication Theory Workshop. He is also recipient of the 2015 IEEE Transactions on Communications Exemplary Reviewer Award, and the 2015 Journal of Lightwave Technology Best Paper Award, honoring the most influential, highest-cited original paper published in the journal in 2015. Dr. Alvarado is a senior member of the IEEE and served as an associate editor for IEEE Transactions on Communications (Optical Coded Modulation and Information Theory) during 2016-2018. During 2018–2020, he served in the OFC subcommittee Digital and Electronic Subsystems (S4). He also served in the ECOC subcommittee Theory of Optical Communications during 2019–2022. His general research interests are in the  areas of digital communications, coding, and information theory. 
\end{IEEEbiographynophoto}
\end{document}